\documentclass[11pt,letterpaper]{article}

\usepackage[utf8]{inputenc}
\usepackage[T1]{fontenc}
\usepackage{times}
\usepackage[english]{babel}
\usepackage[letterpaper]{geometry}
\usepackage{setspace}
\usepackage{indentfirst}

\usepackage{amsmath, amsfonts, amssymb, amsthm, mathtools, actuarialsymbol}
\theoremstyle{definition}
\newtheorem{definition}{Definition}[section]
\newtheorem{assumption}{Assumption}[section]
\newtheorem{remark}{Remark}[section]
\theoremstyle{plain}
\newtheorem{theorem}{Theorem}[section]
\newtheorem{lemma}[theorem]{Lemma}
\newtheorem{corollary}[theorem]{Corollary}
\numberwithin{equation}{section}
\newtheorem{proposition}[theorem]{Proposition}

\usepackage{graphicx}
\usepackage{booktabs}
\usepackage{caption}
\usepackage{subcaption}
\usepackage[linesnumbered,ruled,vlined]{algorithm2e}
\usepackage{longtable}

\usepackage{authblk}
\usepackage{fancyhdr}
\usepackage[authoryear]{natbib}
\usepackage{hyperref}

\usepackage[capitalise]{cleveref}

\hypersetup{
    colorlinks=true,
    allcolors=blue,
    linktoc=all,
    pdfborder={0 0 0},
}

\newcommand{\instdiscount}{\eta_t}

\title{Habit Formation, Labor Supply, and the Dynamics of Retirement and Annuitization}

\author[1]{Criscent Birungi}
\author[1]{Cody Hyndman\thanks{Corresponding author. Emails:
     \href{mailto:criscent.birungi@concordia.ca}{criscent.birungi@concordia.ca},
     \href{mailto:cody.hyndman@concordia.ca}{cody.hyndman@concordia.ca}}}

\affil[1]{Department of Mathematics and Statistics \\ Concordia University \\ Montr\'eal, QC, Canada}

\date{February 1, 2026}

\begin{document}
\maketitle

\begin{abstract}
  \noindent
  The decision to annuitize wealth in retirement planning has become increasingly complex due to rising longevity risk and changing retirement patterns, including increased labor force participation at older ages. While an extensive literature studies consumption, labor, and annuitization decisions, these elements are typically examined in isolation.  This paper develops a unified stochastic control and optimal stopping framework in which habit formation and endogenous labor supply shape retirement and annuitization decisions under age-dependent mortality.  We derive optimal consumption, labor, portfolio, and annuitization policies in a continuous-time lifecycle model. The solution is characterized via dynamic programming and a Hamilton–Jacobi–Bellman variational inequality.  Our results reveal a rich sequence of retirement dynamics.  When wealth is low relative to habit, labor is supplied defensively to protect consumption standards.  As wealth increases, agents enter a work-to-retire phase in which labor is supplied at its maximum level to accelerate access to retirement. Human capital acts as a stabilizing asset, justifying a more aggressive pre-retirement investment portfolio, followed by abrupt de-risking upon annuitization. Subjective mortality beliefs are a key determinant in shaping retirement dynamics. Agents with pessimistic longevity beliefs rationally perceive annuities as unattractive, leading them to avoid or delay annuitization. This framework provides a behavior-based explanation for low annuity demand and offers guidance for retirement planning jointly linking labor supply, portfolio choice, and the timing of annuitization. 
\end{abstract}

\vspace{1em}
\noindent \textbf{Keywords:} Optimal annuitization; Habit formation; Labor supply; Lifecycle portfolio choice; Stochastic control; Optimal stopping; Dynamic programming; Gompertz law

\vspace{1em}
\noindent \textbf{Mathematics Subject Classification (2020):} Primary 91G10; Secondary 93E20, 49L20

\section{Introduction}
\noindent An annuity is a contract that provides a buyer with a guaranteed, regular income. This conversion of an investment into a steady stream of payments is known as annuitization. As \cite{buttarazzi2025optimal} notes, this is a major retirement decision where individuals trade the potential for investment growth for the long-term financial stability of a guaranteed income for life. This decision is increasingly complex due to rising longevity risk, a demographic shift evidenced by growing labor force participation among older age groups as noted by \cite{gao2022optimal}. For instance, the Bureau of Labor Statistics projects that by 2033, the labor participation rate will exceed $10\%$ for those aged 75 and older and surpass $30\%$ for those aged 65. The presence of flexible labor income, which can be used to bridge employment gaps or supplement retirement welfare \citep[see][]{klotz2021, lorenz2021}, transforms conventional retirement planning.

The primary challenge is to determine how an agent can maximize lifetime utility from consumption and leisure, given deterministic age-dependent force of mortality, while managing wealth optimally. Formally, the problem is a coupled continuous-time stochastic control and optimal stopping problem, in which consumption, labor supply, portfolio choice, and the timing of irreversible annuitization are jointly determined. This study addresses this optimization problem within the framework of stochastic optimal control. The agent's wealth, supplemented by wage income, is governed by a stochastic differential equation (SDE) influenced by their strategic choices in investment ($\pi_t$), consumption ($c_t$), and labor ($b_t$). The option to irrevocably annuitize wealth introduces an optimal stopping component that is jointly determined with consumption, labor, and portfolio decisions.

We employ the dynamic programming approach, specifically formulating the problem as a Hamilton-Jacobi-Bellman Variational Inequality (HJBVI), as the agent must simultaneously solve for the optimal policies and the optimal time to stop (i.e., to annuitize). This HJB approach, developed in the seminal works of \cite{merton1969lifetime, merton1971optimum_na} and extended to stopping problems by \cite{shreve1998stochastic}, provides a natural framework for jointly characterizing optimal policies and annuitization timing.  It complements alternative frameworks such as the duality and martingale methods, which have been foundational to modern portfolio and consumption/investment theory \citep{rockafellar1998variational, karatzas2000utility, gao2022optimal}.

Our work builds on two rich streams of literature. The first explores optimal consumption, leisure, and investment choices \citep[see][]{cvitanic1992convex, labbe2007convex, choi2008optimal, barucci2012optimal, koo2013optimal, lee2015optimal, peng2023optimal}. A significant portion of this literature, including \cite{gerrard2012choosing} and \cite{gao2022optimal}, simplifies the problem by assuming a constant mortality rate. This assumption, while tractable, abstracts from the strongly age-dependent nature of mortality risk.  A second, parallel stream of literature incorporates the psychological realism of \textit{habit formation}, where utility is derived not from the level of consumption, but from its level relative to an accustomed habit \citep{sundaresan1989intertemporally, constantinides1990habit, detemple1991asset, Dybvig1995Duesenberry}. This feature introduces a powerful consumption-smoothing motive and a "defensive" demand for assets. However, these models often do not include the complexities of flexible labor supply or the optimal annuitization decision.

This paper bridges these gaps by characterizing a unified stochastic control and optimal stopping framework in which habit formation and endogenous labor supply generate state-dependent retirement and annuitization regimes.  We derive the HJBVI for this problem and, by employing a dimensionality reduction technique, derive semi-analytical solutions for the value function and the corresponding optimal policies for consumption, labor supply, and investment. We then provide analysis with rigorous characterization of the interplay between consumption habits, mortality risk, labor-leisure trade-offs, and investment behavior under the option to annuitize.

The rest of this paper is structured as follows. Section \ref{market_model} discusses the market model, including the economic background and mathematical formulations. Section \ref{sec:DPP} discusses the HJBVI and the proposed methods for determining optimality and provides rigorous proofs and characterizes semi-analytical solutions for the value function and the optimal policies. Section \ref{Numerical_Implementation} presents numerical implementation results and key findings. Finally, Section \ref{Conclusion_Recommendations} presents conclusions and recommendations for future research.

\section{Economic background}\label{market_model}
\noindent This section presents the financial market, the key processes governing the agent's decisions, and the formulation of the optimization problem.
\subsection{The Financial Market}
\noindent We consider a financial market on a filtered probability space $(\Omega, \mathcal{F}, \{\mathcal{F}_t\}_{t \geq 0}, P)$ satisfying the usual conditions and supporting a one-dimensional standard Brownian motion $\{W_t\}_{t \geq 0}$. The market contains two continuously traded assets: a risk-free money market account, $S^0$, and a risky stock, $S^1$. Their prices evolve as
\begin{align}
dS_t^0 &= r S_t^0 \, dt, \qquad S_0^0 = 1, \\
dS_t^1 &= \mu S_t^1 \, dt + \sigma S_t^1 \, dW_t, \qquad S_0^1 > 0,
\end{align}
where $r>0$ is the constant risk-free rate, and $\mu \in \mathbb{R}$ and $\sigma>0$ are the constant expected return and volatility of the risky asset, respectively. Let $\phi_t^0$ and $\phi_t^1$ denote the number of units held at time $t$ of the risk-free and risky assets, respectively. The total wealth is
\begin{equation}\label{total_wealth_labour}
X_t = \phi_t^0 S_t^0 + \phi_t^1 S_t^1.
\end{equation}
In this framework, the agent makes decisions on consumption $c_t \geq 0$, investment, and labor supply $b_t \geq 0$. The labor supply generates income at a constant wage rate $w>0$. We assume the agent holds a \emph{self-financing} portfolio.
The self-financing condition is modified to include both consumption withdrawals and labor income infusions as
\begin{equation}\label{self_financing_labour}
dX_t = \phi_t^0 \, dS_t^0 + \phi_t^1 \, dS_t^1 - c_t \, dt + w b_t Z_t \, dt
\end{equation}
where $Z_{t}$ is the consumption habit.  The condition in \eqref{self_financing_labour} implies that changes in wealth arise only from capital gains, consumption, and labor income, with no other external infusion or withdrawal of funds. To derive the wealth dynamics, we define the \emph{dollar amount} invested in the risky asset as $\pi_t := \phi_t^1 S_t^1$. From the definition of total wealth \eqref{total_wealth_labour}, the amount in the risk-free asset is $\phi_t^0 S_t^0 = X_t - \pi_t$. Substituting the asset dynamics and these portfolio definitions into the self-financing condition \eqref{self_financing_labour}, we obtain
\begin{equation} \label{wealth_sde_labour}
dX_t = \left[ rX_t + \pi_t(\mu - r) - c_t + w b_t Z_t \right] dt + \sigma \pi_t \, dW_t.
\end{equation}
For the remainder of the paper, we define the market price of risk as $\theta:=(\mu-r)/\sigma$. Equation \eqref{wealth_sde_labour} describes the evolution of the agent’s wealth over time, incorporating investment decisions $(\pi_t)$, consumption plans $(c_t)$, and labor income $(wb_t Z_t)$.

\subsection{Problem Formulation}
\noindent
We begin by establishing the foundational assumptions for the optimization problem.

\begin{assumption}[] \label{ass:setup}
We assume the following initial conditions and market structure
\begin{enumerate}%
 \item \textit{Market Parameters:} The financial market is complete and free of arbitrage. The parameters $r, \mu, \sigma, w$ are all positive constants.
 \item \textit{Initial Conditions:} The retiree starts with an initial wealth $X_0 = x > 0$ and an initial habit level $Z_0 = z > 0$.
\end{enumerate}
\end{assumption}

\subsubsection{State Processes: Wealth and Habit}
\noindent The agent's financial journey is captured by two state variables: their \textit{wealth ($X_t$)} and their \textit{consumption habit ($Z_t$)}. The evolution of the agent's wealth over time is governed by the SDE presented in \eqref{wealth_sde_labour}. The consumption habit, $Z_t$, models how the agent's past spending influences their current consumption desires. Following the framework in \cite{angoshtari2023optimal}, the dynamics of this habit formation process are described by the ordinary differential equation
\begin{equation}\label{enq:Habit_model}
    dZ_t = \tilde{\rho} (c_t - Z_t)dt,
\end{equation}
where $c_t$ represents the agent's consumption rate, while the parameter $\tilde{\rho} > 0$ determines how quickly their habit adapts to changes in their spending. A central feature of this model is a floor on consumption, which prevents the agent from drastically reducing their spending. This constraint ensures that their consumption rate, $c_t$, must always be greater than or equal to a specified proportion, $\alpha$, of their habit level, $Z_t$
\begin{equation} \label{eq:consumption_floor}
    \alpha Z_t \leq c_t, \quad \text{for a constant } \alpha \in (0,1].
\end{equation}
Here, the parameter $\alpha$ quantifies the \textit{addictiveness} of the habit. As $\alpha$ approaches 1, the habit is considered completely addictive, compelling the agent to consume at a rate very close to their established habit. Conversely, a value of $\alpha$ near 0 corresponds to a non-addictive habit, granting the agent greater flexibility in their consumption choices.

\begin{assumption}[Financial Viability] \label{ass:viability}
The model parameters must satisfy the inequality $r > \tilde{\rho}(1-\alpha)$. This assumption is necessary to guarantee that the minimum required consumption plan for the agent, where $c_t = \alpha Z_t$ for all $t$, is affordable.
\end{assumption}

\subsubsection{Deterministic force of mortality}\label{Deterministic_Force_of_Mortality}

\noindent Let $n$ denote the agent's current age, and let $T_n$ represent the agent's uncertain remaining lifetime. In \cite{gao2022optimal} and \cite{gerrard2012choosing}, it is assumed that $T_n$ follows a distribution with a constant force of mortality $\delta > 0$. Under this assumption, the probability of surviving $t$ years is given by
\begin{equation}\label{constant_mortality}
P(T_n > t) = \exp\left(-\delta t\right).
\end{equation}
However, a more realistic model would incorporate an age-dependent force of mortality, particularly one that increases with age. Following the argument in \cite{ashraf2023voluntary}, we introduce an age-dependent force of mortality and assume the deterministic Gompertz  mortality law. This law assumes the rate of change in the force of mortality, $\delta_t$, is proportional to its current level, leading to exponential growth over time. Mathematically, $\delta_t$ follows the ordinary differential equation (ODE)
\begin{equation}\label{eq:age_dependent_mortality_ODE}
\mathrm{d}\delta_t = \tilde{\theta} \delta_t \,\mathrm{d}t.
\end{equation}
The solution to the ODE in \eqref{eq:age_dependent_mortality_ODE} yields the time-dependent force of mortality: $\delta_t = \delta_0 e^{ \tilde{\theta} t}$. This functional form is consistent with the usual actuarial representation at age $n_t$ and time $t$
\begin{equation}\label{eq:Gompertz_mortality}
\delta_t = \frac{1}{\lambda} \exp\left(\frac{n_t - m}{\lambda}\right),
\end{equation}
where $n_t$ is the age at time $t$, $m$ is the modal value of life (in years), and $\lambda$ is the dispersion parameter (in years). The term $\tilde{\theta}$ is referred to as the Gompertz aging rate and coincides with the inverse of the dispersion parameter (i.e., $\tilde{\theta} = 1/\lambda$). In this age-dependent case, the deterministic probability of surviving $t$ years is expressed as
\begin{equation}\label{eq:age_dependent_mortality}
\px[t]{n}[(\delta)]=P(T_n > t)=\exp\left(-\int_{0}^{t} \delta_y \, dy\right),
\end{equation}
where $\delta_y$ represents the force of mortality at time $y$ (age $n+y$). The deterministic age-dependent force of mortality is independent of the fund's level of evolution, simplifying the subsequent analysis.

\vspace{1em}

\noindent The survival probability from time $t$ to $s$ (where $s > t$) for an individual currently aged $n$, denoted $p(t; s; \delta_t) = P(T_n > s-t \mid T_n > t)$, exhibits a key simplifying property under a deterministic force of mortality
\begin{align}
p(t; s; \delta_t)
&= \exp\!\left(-\int_t^s \delta_y\,dy\right)
= \frac{\exp\!\left(-\int_0^s \delta_y\,dy\right)}{\exp\!\left(-\int_0^t \delta_y\,dy\right)}
= \frac{p(s;\delta_0)}{p(t;\delta_0)} .
\label{eq:deterministic_survival}
\end{align}
The identity in equation \eqref{eq:deterministic_survival} generally fails once the mortality intensity becomes stochastic. Nonetheless, since calibration is typically carried out at time $0$, that is, only to $p(t;\delta_0)$, this discrepancy does not pose practical difficulties.

\subsubsection{Admissible Strategies and the Optimization Problem}
\noindent To ensure the optimization problem is well-posed, we must formally define the set of admissible control strategies. The agent selects a portfolio process ($\pi_t$), a consumption rate ($c_t$), and a labor supply rate ($b_t$), which are all assumed to be $\mathcal{F}_t$-progressively measurable processes. Following the framework in \citet{lee2015optimal}, we define the sets of allowable control processes.\\

\noindent The set of admissible portfolio strategies $\Pi$ is defined as
\begin{equation}
\Pi := \left\{ \pi : [0, \infty) \times \Omega \to \mathbb{R} \,\middle|\, \pi \text{ is } \mathcal{F}_t\text{-prog. measurable and } \int_0^T \|\pi_t\|^2 \, dt < \infty, \text{ a.s. } \forall T > 0 \right\}.
\end{equation}

\noindent The set of admissible consumption processes $\mathcal{C}$ is defined as
\begin{equation}
\mathcal{C} := \left\{ c : [0, \infty) \times \Omega \to \mathbb{R} \;\middle|\; c \text{ is } \mathcal{F}_t\text{-prog. measurable and } \int_0^T c_t \, dt < \infty \text{ a.s. } \forall T > 0 \right\}.
\end{equation}

\noindent The set of admissible labor supply processes $\mathcal{B}$ is defined as
\begin{equation}
\mathcal{B} := \left\{ b : [0, \infty) \times \Omega \to [0, \bar{b}] \;\middle|\; b \text{ is } \mathcal{F}_t\text{-prog. measurable and } \int_0^T b_t \, dt < \infty \text{ a.s. } \forall T > 0 \right\}.
\end{equation}
Let $\mathcal{T}$ be the set of all $\mathcal{F}_t$-stopping times. An admissible strategy is a tuple $(\pi, c, b, \tau) \in \Pi \times \mathcal{C} \times \mathcal{B} \times \mathcal{T}$ that satisfies several key conditions.

\begin{assumption}[Admissibility]\label{ass:admissibility_conditions}
A control strategy $(\pi, c, b, \tau)$ is considered admissible for an initial state $(x,z)$ if it satisfies the following conditions
\begin{enumerate}
    \item \textit{Regularity:} The control processes must belong to their respective admissible sets, i.e., $(\pi, c, b) \in \Pi \times \mathcal{C} \times \mathcal{B}$.
    \item \textit{Habit Constraint:} The consumption rate must remain above a minimum proportion $\alpha$ of the habit level: $c_t \ge \alpha Z_t$ for almost every $(t, \omega) \in [0, \tau]$.
    \item \textit{Solvency Constraint:} The agent's wealth process $X_t^{(\pi,c,b)}$ must remain non-negative throughout the investment horizon: $X_t^{(\pi,c,b)} \ge 0$ for all $t \in [0, \tau]$ almost surely. This prevents bankruptcy.
    \item \textit{Stopping Time:} The time $\tau$ must be an $\{\mathcal{F}_t\}$-stopping time. This ensures that the decision to annuitize is non-anticipatory and based only on information available up to that time.
\end{enumerate}
\end{assumption}

\noindent The set of all such admissible strategies for a given initial state $(t, x, z)$ is denoted by $\mathcal{A}(t, x, z)$. 
\subsubsection{Leisure process}

\noindent The consumer maximizes utility derived from consumption $c_t$ and leisure $l_t$, subject to choices over consumption $c_t$, portfolio allocation $\pi_t$, and labor supply $b_t$. Retirement occurs at the optimally chosen stopping time $\tau$.  

A consumer is endowed with two levels of leisure. We adopt a two-level leisure specification in which the agent enjoys a lower level of leisure before retirement and a higher level afterward, consistent with \cite{huang2012yaari} and \cite{ashraf2023voluntary}. The leisure process is defined as

\begin{equation}\label{eq:leisure_constraint}
l_t =
\begin{cases}
l_1, & \text{for } 0 \le t < \tau \quad \text{(pre-retirement)}, \\
\bar{l}, & \text{for } t \ge \tau \quad \text{(post-retirement)},
\end{cases}
\end{equation}

\noindent where $\bar{l}> l_1 \ge 1$ (following \cite{choi2008optimal} and \cite{koo2013optimal}). Because labor is supplied only before retirement ($t < \tau$), we set pre-retirement leisure to $l_1 = 1$ and take post-retirement leisure to satisfy $\bar{l} > 1$. It is worth emphasizing that, in this setup, leisure is not expressed in units of time \cite{ashraf2023voluntary}; rather, the parameter $l_t$ reflects the agent's relative utility or enjoyment of consumption after retirement compared with the working phase.\\

\noindent The instantaneous utility function is multiplicatively separable and given by
\begin{equation}\label{eqn:utility_fun_1}
 U(c_t, b_t) = \frac{1}{1-\gamma} \left( c_t (\bar{l} - b_t)^{\psi} \right)^{1-\gamma},
\end{equation}
where $\gamma > 0, \gamma \neq 1$ is the coefficient of relative risk aversion over the composite consumption-leisure good, $\bar{l} - b_t$ is leisure, and $\psi > 0$ is the \textit{preference weight for leisure}.

The agent's objective is to maximize the total expected discounted utility. The objective function for any strategy $(\pi,c,b,\tau) \in \mathcal{A}(t,x,z)$ is

\begin{equation}\label{eq:objective_functional_habit}
J(t,x,z;\pi,c,b,\tau) = \mathbb{E}_{t,x,z}\left[ \int_t^{\tau} e^{-\int_t^s (\beta + \delta_u) du} U(c_s, b_s) ds + e^{-\int_t^{\tau} (\beta + \delta_u) du} G(X_{\tau}) \right],
\end{equation}
where $\beta > 0$ is the subjective time preference rate and $G(X_\tau)$ is the post-annuitization value function. For the expectation in equation \eqref{eq:objective_functional_habit} to be well-defined, we require that

\begin{equation}
\mathbb{E}_{t,x,z}\left[ \int_t^\tau e^{-\int_t^s (\beta + \delta_u) du} U^-(c_s, b_s) \, ds + e^{-\int_t^{\tau} (\beta + \delta_u) du} G^-(X_{\tau}) \right] < \infty,
\end{equation}
where $f^- = \max(-f, 0)$ denotes the negative part of a function $f$. The optimization problem is to find the value function $V(t, x, z)$
\begin{equation}\label{eq:value_function_def}
V(t,x,z) = \sup_{(\pi,c,b,\tau) \in \mathcal{A}(t,x,z)} J(t,x,z;\pi,c,b,\tau).
\end{equation}
The expression in equation \eqref{eq:value_function_def} is a stochastic optimal control problem with state variables $(X_t, Z_t)$ and control variables $(\pi_t, c_t, b_t, \tau)$. The solution involves finding the optimal policies $(\pi_t^*, c_t^*, b_t^*, \tau^*)$ that attain the supremum. The remainder of the paper is dedicated to solving this problem.

\begin{remark}
We assume that $V(t,x,z) < \infty$ for all $(t,x,z)$. As noted in \cite{bertsekas1996stochastic}, a sufficient condition for this is that the utility functions in equation \eqref{eq:objective_functional_habit} satisfies a growth condition. For our problem, this would mean that
\begin{equation}\label{sufficient_condition_V}
\max\{U(c,b), G(x)\} \leq K_1 + K_2 (x+z)^{\hat{\eta}}
\end{equation}
for some constants $K_1 > 0, K_2 > 0$ and $\hat{\eta} \in (0,1)$.
\end{remark}

\section{Dynamic Programming and the HJB Variational Inequality}\label{sec:DPP}
\subsection{Dynamic Programming Principle}\label{sec:DDP_}
\noindent The \textit{Dynamic Programming Principle (DPP)} is used to solve the problem in equation \eqref{eq:value_function_def}. For a combined optimal control and stopping problem, the DPP asserts that at any point in time, the optimal strategy must be the better of two choices: either to stop immediately (i.e., annuitize) or to continue for an infinitesimal time period and then proceed optimally from the new state. The optimal path is thus composed of a series of such optimal decisions.

\begin{proposition}[Dynamic Programming Principle]\label{thm:Dynamic_programming_principle_habit}
The value function $V(t, x, z)$ satisfies the following dynamic programming equation for any small time interval $\Delta t > 0$
\begin{equation} \label{eq:dpp}
\begin{split}
    V(t, x, z) = \max \Biggl\{ G(x), \sup_{(\pi, c, b)} \mathbb{E}_{t,x,z} \biggl[ &\int_t^{t+\Delta t} e^{-\int_t^s (\beta + \delta_u) du} U(c_s,b_s) ds \\
    &+ e^{-\int_t^{t+\Delta t} (\beta + \delta_u) du} V(t+\Delta t, X_{t+\Delta t}, Z_{t+\Delta t}) \biggr] \Biggr\}.
\end{split}
\end{equation}
\end{proposition}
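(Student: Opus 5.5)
We will prove the identity by establishing the two inequalities ``$\le$'' and ``$\ge$'' separately. Throughout, the right-hand side of \eqref{eq:dpp} is denoted $\mathcal{R}(t,x,z)$, and the argument is read in the natural sense: inside the supremum the continuation is evaluated on the event $\{\tau > t+\Delta t\}$. Two features of the model make the usual control--stopping argument work with little friction. First, the discount factor $e^{-\int_t^s(\beta+\delta_u)du}$ is deterministic, because the Gompertz intensity \eqref{eq:age_dependent_mortality_ODE} is deterministic. Second, it is multiplicative, so that
\[
e^{-\int_t^s(\beta+\delta_u)du}=e^{-\int_t^{t+\Delta t}(\beta+\delta_u)du}\,e^{-\int_{t+\Delta t}^s(\beta+\delta_u)du}
\]
for $s\ge t+\Delta t$. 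This is exactly the semigroup property used in \eqref{eq:deterministic_survival}. Together with the Markov property of the controlled pair $(X,Z)$ given by \eqref{wealth_sde_labour}--\eqref{enq:Habit_model}, these facts let us split the objective \eqref{eq:objective_functional_habit} at time $t+\Delta t$.

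\emph{Step 1 (``$\le$'').} Fix an admissible $(\pi,c,b,\tau)\in\mathcal{A}(t,x,z)$. On the event $\{\tau=t\}$, which lies in $\mathcal F_t$, the payoff equals $G(x)$. On its complement, split the integral at $(t+\Delta t)\wedge\tau$. On $\{\tau\le t+\Delta t\}$ the reward is $\int_t^\tau(\cdots)+e^{-\int_t^\tau}G(X_\tau)$. This quantity is bounded by the corresponding expression with $V(\tau,X_\tau,Z_\tau)$ in place of $G(X_\tau)$, since $V\ge G$: stopping immediately is always admissible. On $\{\tau>t+\Delta t\}$, condition on $\mathcal F_{t+\Delta t}$. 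The restricted strategy on $[t+\Delta t,\infty)$ is admissible from $(t+\Delta t,X_{t+\Delta t},Z_{t+\Delta t})$, because the habit, solvency and measurability constraints in Assumption \ref{ass:admissibility_conditions} are all ``local in time''. Hence the conditional expected reward is at most $V(t+\Delta t,X_{t+\Delta t},Z_{t+\Delta t})$. Combining these bounds, using the tower property and the integrability condition on $U^-,G^-$, and taking the supremum over strategies gives $V\le\mathcal{R}$. For the stopped-before-$t+\Delta t$ piece, we use the general form of the DPP with stopping time $(t+\Delta t)\wedge\tau$, which reduces to the stated form.

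\emph{Step 2 (``$\ge$'').} The inequality $V\ge G(x)$ is immediate from the choice $\tau=t$. For the continuation part, fix $(\pi,c,b)$ on $[t,t+\Delta t]$ and $\varepsilon>0$. We need an $\varepsilon$-optimal admissible continuation from each random state $(X_{t+\Delta t},Z_{t+\Delta t})$, selected measurably. To obtain it, first use the homotheticity of the problem in $(x,z)$, which is the dimension reduction the paper later exploits, together with the monotonicity of $V$ in $x$ and in $-z$. These give continuity, or at least lower semicontinuity, of $V$ in $(x,z)$. Next, partition the state space into countably many small cells $A_i$ with representative points $(x_i,z_i)$. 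In each cell choose a strategy that is $\varepsilon$-optimal from $(x_i,z_i)$, and transport it to nearby states. For $x\ge x_i$ and $z\le z_i$ we can simply rescale the strategy, or keep the extra wealth in the bond, so that admissibility is preserved. Finally, paste these strategies together on $\{(X_{t+\Delta t},Z_{t+\Delta t})\in A_i\}$. The concatenated tuple, including the stopping time $\tau=\tau_i$ on each cell, is progressively measurable and admissible. Its value is at least the continuation expression in \eqref{eq:dpp} minus $C\varepsilon$. Letting $\varepsilon\to0$ yields $V\ge\mathcal{R}$.

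The main obstacle is Step 2: the measurable $\varepsilon$-optimal selection, and in particular the regularity of $V$ needed to pass from the grid points to arbitrary states. The habit floor $c\ge\alpha Z$ couples the admissible sets to $z$. We will therefore choose cells oriented so that each representative point has \emph{less} wealth and \emph{more} habit than every point in its cell. Transported strategies then remain feasible: the floor is weaker and the solvency margin is larger. Under Assumption \ref{ass:viability} this makes the minimal consumption plan affordable, and the loss of value is controlled by the continuity modulus of $V$. If lower semicontinuity is delicate near the boundary $x=\alpha$-affordability level, we will instead invoke a general measurable-selection DPP for control--stopping problems, in the sense of Bouchard--Touzi's weak DPP, and apply it in its upper/lower semicontinuous-envelope form.
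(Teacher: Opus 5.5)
\textbf{The gap is in the last sentence of your Step 1.} After you replace $G(X_\tau)$ by $V(\tau,X_\tau,Z_\tau)$ on $\{\tau<t+\Delta t\}$, what you have is a DPP at the random time $\theta=\tau\wedge(t+\Delta t)$. It does not ``reduce to the stated form''. That reduction would need, on $\{\tau<t+\Delta t\}$, that $V(\tau,X_\tau,Z_\tau)$ be dominated by the value of running some controls \emph{without stopping} up to $t+\Delta t$ and then collecting $V(t+\Delta t,X_{t+\Delta t},Z_{t+\Delta t})$. Dynamic programming gives only the opposite inequality: running utility plus discounted $V$ is a supermartingale under every control. The domination fails wherever stopping is strictly better than continuing, i.e.\ where $\sup_{p,\kappa,b}[u_1+\mathcal{L}V]-\eta_t V<0$ (the interior of the annuitization region). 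Near $y^*$ the controlled ratio enters $\{y>y^*\}$ before $t+\Delta t$ with positive probability, and a strategy forbidden to stop loses value there; meanwhile $G(x)<V$ at a continuation point. The identity as written is a one-step Bermudan recursion, and it is false in general for fixed $\Delta t>0$. For instance, take an American put with payoff $g(x)=(K-x)^+$ on a geometric Brownian motion with drift $r$. There $e^{-rs}V(X_s)$ has drift $-rKe^{-rs}<0$ in the exercise region, so $V(x)>\max\{g(x),\mathbb{E}[e^{-r\Delta t}V(X_{\Delta t})]\}$ at every continuation point. So no argument can close Step 1 as stated. If your ``natural sense'' reading is meant to put a stopping time inside the supremum, you are proving a different identity and should state it, rather than claim it reduces to \eqref{eq:dpp}.

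The paper's own proof has exactly the same hole. In Part~2 it splits $J$ at the deterministic time $t+\Delta t$ for every $\tau^\varepsilon>t$, which tacitly assumes $\tau^\varepsilon\ge t+\Delta t$. Write $D_s=e^{-\int_t^s(\beta+\delta_u)du}$. Your Step 1 (keeping $G(X_\tau)$ on the early-stopping event) together with your Step 2 (pasting only on $\{\tau\ge t+\Delta t\}$) does prove
\[
V(t,x,z)=\sup_{(\pi,c,b,\tau)}\mathbb{E}_{t,x,z}\Bigl[\int_t^{\theta}D_sU(c_s,b_s)\,ds+D_\theta\bigl(\mathbf{1}_{\{\tau<t+\Delta t\}}G(X_\tau)+\mathbf{1}_{\{\tau\ge t+\Delta t\}}V(t+\Delta t,X_{t+\Delta t},Z_{t+\Delta t})\bigr)\Bigr].
\]
The max-form \eqref{eq:dpp} is only its formal $\Delta t\downarrow 0$ limit, which is all the derivation of \eqref{eq:hjb_vi_labor_corrected} needs. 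On the ``$\ge$'' side, your measurable-selection argument is more careful than the paper's Part~1, which simply assumes one can ``proceed optimally'' from $t+\Delta t$. One correction there: wage income $wb_tZ_t$ grows with the habit, so $V$ need not be monotone in $z$. For the same reason, parking the extra wealth in the bond does not by itself keep $X\ge 0$ when $z<z_i$. Use the homothetic rescaling $(\pi,c)\mapsto (z/z_i)(\pi,c)$ instead. It is feasible from $(x,z)$ whenever $x/z\ge x_i/z_i$, and it yields at least $(z/z_i)^{1-\gamma}$ times the original reward.
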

\begin{proof}
  See Appendix~\ref{app:DPP_Proof}.
\end{proof}

\subsection{The Hamilton-Jacobi-Bellman Variational Inequality}
\noindent Since equation \eqref{eq:value_function_def} is a combined optimal control and stopping problem, the value function must satisfy the Hamilton-Jacobi-Bellman Variational Inequality (HJBVI).  Let $\instdiscount = \beta + \delta_t$ be the instantaneous effective discount rate. The value function $V(t,x,z)$ must satisfy the HJBVI
\begin{equation}\label{eq:hjb_vi_labor_corrected}
\max \left( \sup_{\pi, c \ge \alpha z, b} \left\{ U(c, b) - \instdiscount V + \mathcal{L}^{\pi, c, b}V \right\}, \quad G(x) - V(t, x, z) \right) = 0.
\end{equation}
where the infinitesimal generator $\mathcal{L}^{\pi,c,b}$ for the state process $(X_t, Z_t)$ is given by
\begin{equation}\label{eq:generator_full_labor}
\mathcal{L}^{\pi, c,b} V = \frac{\partial V}{\partial t} + \left[ rX + \pi(\mu - r) - c + w b Z \right] \frac{\partial V}{\partial x} + \tilde{\rho}(c-Z) \frac{\partial V}{\partial z} + \frac{1}{2}\sigma^2 \pi^2 \frac{\partial^2 V}{\partial x^2}.
\end{equation}

\subsection{Dimensionality Reduction}\label{sec:Dimensionality_Reduction}
\noindent Solving the HJBVI in equation \eqref{eq:hjb_vi_labor_corrected} as a free-boundary problem with two state variables can be challenging. We employ \textit{dimensionality reduction}, leveraging the homothetic properties of the utility function. This allows us to reframe the stochastic optimal control problem in terms of a single state variable: the wealth-to-habit ratio, $y=x/z$. We define this ratio as
\begin{equation}\label{eq:y_t}
    y_t := \frac{X_t}{Z_t}.
\end{equation}
This reduction yields a one-dimensional free-boundary problem that admits semi-analytical solutions, with optimal policies characterized explicitly up to a finite set of constants determined by smooth-pasting conditions. The dynamics of $y_{t}$, which are central to the reformulation of the problem  are obtained by It\^{o}'s formula applied to the quotient of $X_t$ and $Z_t$.

\begin{lemma}[SDE for the Wealth-to-Habit Ratio]\label{SDE_Wealth-to-Habit_Ratio}
Let the agent's wealth process $X_t$ and habit process $Z_t$ be governed by the SDEs in equation \eqref{wealth_sde_labour} and \eqref{enq:Habit_model}, respectively. Define the \textit{portfolio weight} $p_t := \pi_t / X_t$, and the \textit{consumption-to-habit ratio} $\kappa_t := c_t / Z_t$. Then, the SDE governing the wealth-to-habit ratio, $y_t := X_t / Z_t$, is given by
\begin{equation} \label{eq:y_dynamics_final_labor_corrected}
    \mathrm{d}y_t = \Big[(r+\tilde{\rho})y_t + p_t y_t(\mu-r) - \kappa_t(1+\tilde{\rho} y_t) + w b_t \Big] \mathrm{d}t + \sigma p_t y_t\,\mathrm{d}W_t,
\end{equation}
\end{lemma}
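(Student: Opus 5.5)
The plan is to apply the quotient rule of It\^{o} calculus to $y_t = X_t/Z_t$, using that $Z_t$ is a process of finite variation. No cross-variation term then appears and the computation reduces to a product rule.

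\textbf{Step 1: $Z_t$ is strictly positive and absolutely continuous.} Solving the linear ODE \eqref{enq:Habit_model} gives
\[
Z_t = z e^{-\tilde{\rho} t} + \tilde{\rho}\int_0^t e^{-\tilde{\rho}(t-s)} c_s\, ds .
\]
By Assumption~\ref{ass:setup} we have $z>0$, and admissible consumption satisfies $c\ge 0$ with $\int_0^T c_t\,dt<\infty$ a.s. Hence $Z_t \ge z e^{-\tilde{\rho} t} > 0$, and $Z$ is absolutely continuous, so $\langle Z\rangle_t=\langle X,Z\rangle_t=0$. Consequently $f(x,z)=x/z$ is smooth on the range of $(X_t,Z_t)$. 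Alternatively, $1/Z_t$ is itself absolutely continuous with $d(1/Z_t) = -Z_t^{-2}\,dZ_t$.

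\textbf{Step 2: apply It\^{o}'s product rule.} Because $Z$ has finite variation,
\[
dy_t = \frac{dX_t}{Z_t} - \frac{X_t}{Z_t^2}\,dZ_t .
\]
Substituting \eqref{wealth_sde_labour} into the first term gives
\[
\bigl[r y_t + \tfrac{\pi_t}{Z_t}(\mu-r) - \kappa_t + w b_t\bigr]dt + \sigma\tfrac{\pi_t}{Z_t}\,dW_t .
\]
Here the labor term $w b_t Z_t/Z_t$ reduces to $w b_t$; this is precisely why the wage income was scaled by $Z_t$ in \eqref{self_financing_labour}. Substituting \eqref{enq:Habit_model} into the second term gives $-y_t\tilde{\rho}(\kappa_t-1)\,dt$.

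\textbf{Step 3: rewrite in the new controls.} Write $\pi_t/Z_t = p_t X_t/Z_t = p_t y_t$. Collecting the $dt$ terms,
\[
r y_t + \tilde{\rho} y_t + p_t y_t(\mu-r) - \kappa_t - \tilde{\rho}\kappa_t y_t + w b_t ,
\]
which is exactly \eqref{eq:y_dynamics_final_labor_corrected}.

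\textbf{Main obstacle.} The only delicate point is the definition of $p_t=\pi_t/X_t$, which breaks down when $X_t=0$; the solvency constraint only guarantees $X_t\ge 0$. I would handle this by adopting the convention that $p_t y_t$ denotes $\pi_t/Z_t$, which is always well defined. Equivalently, one can set $p_t=0$ on $\{X_t=0\}$, where optimal or admissible strategies must have $\pi_t=0$ anyway to preserve nonnegativity. I would also note that the integrability conditions defining $\Pi$, together with the continuity of $1/Z_t$, ensure that the stochastic integral $\int \sigma p_s y_s\,dW_s$ is well defined.
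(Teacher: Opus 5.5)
Your proposal is correct and follows the paper's proof essentially line for line: the It\^{o} quotient rule with the finite-variation habit process $Z_t$, substitution of \eqref{wealth_sde_labour} and \eqref{enq:Habit_model}, and the rewriting $\pi_t/Z_t = p_t y_t$, $c_t/Z_t = \kappa_t$. The paper omits your extra checks, namely the positivity of $Z_t$, the convention for $p_t$ on $\{X_t = 0\}$, and the well-definedness of the stochastic integral; they tighten the argument without changing it.
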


\begin{proof}
To derive the SDE for $y_t$, we apply Itô's lemma for a quotient. Since the habit process $Z_t$ has dynamics governed only by a $\mathrm{d}t$ term, it is a process of finite variation, so the Itô quotient rule simplifies to
\[
\mathrm{d}\left(\frac{X_t}{Z_t}\right) = \frac{1}{Z_t}\mathrm{d}X_t - \frac{X_t}{Z_t^2}\mathrm{d}Z_t.
\]
Substituting the SDEs for $\mathrm{d}X_t$ \eqref{wealth_sde_labour} and $\mathrm{d}Z_t$ \eqref{enq:Habit_model}, we have
\begin{align}
\mathrm{d}y_t &= \frac{1}{Z_t} \left[ (rX_t + \pi_t(\mu-r) - c_t + w b_t Z_t)\,\mathrm{d}t + \sigma\pi_t \,\mathrm{d}W_t \right] 
- \frac{X_t}{Z_t^2}\left[ \tilde{\rho}(c_t-Z_t)\,\mathrm{d}t \right] \\
&= \left[ r\frac{X_t}{Z_t} + \frac{\pi_t}{Z_t}(\mu-r) - \frac{c_t}{Z_t} + w b_t - \tilde{\rho}\frac{X_t}{Z_t}\left(\frac{c_t}{Z_t}-1\right)\right]\mathrm{d}t 
+ \sigma\frac{\pi_t}{Z_t} \,\mathrm{d}W_t. \label{dy_t}
\end{align}
We now use the dimensionless ratios $y_t = X_t/Z_t$, $p_t = \pi_t/X_t$, and $\kappa_t=c_t/Z_t$. This implies $\pi_t/Z_t = p_t X_t / Z_t = p_t y_t$. Substituting these definitions into equation \eqref{dy_t} gives
\begin{align*}
\mathrm{d}y_t &= \left[ ry_t + p_t y_t(\mu-r) - \kappa_t + w b_t - \tilde{\rho} y_t(\kappa_t-1) \right]\mathrm{d}t + \sigma p_t y_t \mathrm{d}W_t \\
&= \left[ ry_t + p_t y_t(\mu-r) - \kappa_t - \tilde{\rho} y_t \kappa_t + \tilde{\rho} y_t + w b_t \right]\mathrm{d}t + \sigma p_t y_t \mathrm{d}W_t \\
&= \left[(r+\tilde{\rho})y_t + p_t y_t(\mu-r) - \kappa_t(1+\tilde{\rho} y_t) + w b_t \right] \mathrm{d}t + \sigma p_t y_t\,\mathrm{d}W_t.
\end{align*}
This completes the derivation.
\end{proof}

\noindent The dynamics for the habit process $Z_t$ can also be rewritten in terms of $\kappa_t$
\begin{equation}\label{eq:habit_dynamics_kappa}
    dZ_t = \tilde{\rho}(c_t - Z_t)dt = \tilde{\rho} Z_t (\kappa_t - 1)dt.
\end{equation}

\noindent The utility function from \eqref{eqn:utility_fun_1} can be written in terms of the consumption-to-habit ratio $\kappa_t$ as
\begin{equation}\label{eqn:utility_fun_2}
    U(c_t, b_t) = \frac{(Z_t)^{1-\gamma}}{1-\gamma} \left( \kappa_t (\bar{l} - b_t)^{\psi} \right)^{1-\gamma}.
\end{equation}

\noindent We now formally define the agent's utility-maximization problem, as in \eqref{eq:objective_functional_habit}. %
Following the framework of \cite{gerrard2012choosing}, the agent controls the consumption rate, the portfolio allocation, and the timing of annuitization.  A fund of size $x$ is converted into an annuity of $kx$, where the annuity factor $k>r$, and this decision is irreversible. If the fund is exhausted before this point, all economic activity, including further investment or withdrawal, ceases.

\begin{remark}[Annuitization Value Function]
For the annuitization phase, we assume the value function at annuitization
is analogous to the value function from the classical Merton problem following \cite{koo2013optimal} framework. The terminal value in \eqref{eq:objective_functional_habit} is 
\begin{equation}
J(y,Z) = \frac{e^{-\rho \tau}}{\rho_{\tau}} \frac{(k y Z)^{1-\gamma}}{1-\gamma} 
\end{equation}

\noindent By applying the homothetic transformation $G(y) = \frac{1-\gamma}{Z^{1-\gamma}}J(y,Z)$ and substituting $X=yZ$, we derive the terminal condition for the transformed problem, 
\begin{equation}\label{eq:post_annuitization_value_1}
G(y) = \frac{1-\gamma}{Z^{1-\gamma}} \left( \frac{e^{-\rho \tau}}{\rho_{\tau}} \frac{(k y Z)^{1-\gamma}}{1-\gamma} \right) = e^{-\rho \tau} \frac{(ky)^{1-\gamma}}{\rho_{\tau}}.
\end{equation}

\end{remark}

\noindent The agent's problem is to choose the control processes $(p_t, \kappa_t, c_t)_{t \ge 0}$ to maximize the objective function. By defining an effective discount rate that incorporates both the subjective time preference \(\beta\) and the age-dependent force of mortality \(\delta_t\), the value function in \eqref{eq:objective_functional_habit} can be expressed as
\begin{equation}\label{eq:objective_functional_final}
V(y)= \sup_{(p_s, \kappa_s, b_s)_{s \ge t}} \mathbb{E} \left[\int_0^\tau e^{-(\rho_s) s} U(b_s, c_s) ds  + G(y) \right].
\end{equation} 
where \(\rho_s\) is the effective cumulative discount rate, given by
\begin{equation}\label{eq:age-dependent_force_of_mortality}
\rho_s = \int_0^s (\beta + \delta_u) \, du.
\end{equation}
and \(U(c_s, b_s)\) is the instantaneous utility function as defined in \eqref{eqn:utility_fun_2} and \(G(y)\) is the value derived from the wealth annuitized at time \(\tau\) as defined in \eqref{eq:post_annuitization_value_1}. The remainder of the paper, we focus on equation \eqref{eq:objective_functional_final}.

\begin{definition}\label{defn:optimal_retirement_wealth_threshold}
The \textit{optimal retirement wealth threshold}, denoted by $y^*$, is the critical level of wealth at which an agent chooses to retire to maximize their lifetime utility.
\end{definition}
\begin{definition}\label{defn:subsistence_consumption_wealth_threshold}
The \textit{labor constraint wealth threshold}, denoted by $\tilde{y}$, is the level of wealth required to ensure the optimal labor supply $b^*$ is strictly less than the upper limit $\bar{b}$. For $\tilde{y} \leq y \leq y^*$, the labor constraint is binding, $b^*=\bar{b}$.
\end{definition}

\noindent Therefore, we’ll break this optimization problem into two interlinked problems: Continuation Region ($0<y < y^*$) and Stopping Region ($y \ge y^*$). We will solve the optimization problem in \eqref{eq:objective_functional_final} by separating it into two interconnected phases, distinguished by an optimal retirement wealth threshold, $y^*$. The decision to transition between phases occurs at the optimal stopping time, defined as
\begin{equation}\label{eq:optimal_stopping_time}
\tau_{y}^{*} = \inf\{t \ge 0 : y_{t} \ge y^{*}\}.
\end{equation}
This framework establishes two distinct operational regions
\begin{itemize}
    \item \textit{Continuation Region ($0 < y < y^*$):} In this phase, the agent actively manages their consumption and investment portfolio while working. In this region, the HJB equation should hold as an equality.
    
    \item \textit{Stopping Region ($y \ge y^*$):} Once the state variable reaches the threshold $y^*$, the process is terminated, and the agent annuitizes their wealth. This is also called full retirement period. Here, the value function $V(y)$ is equal to the terminal payoff.
\end{itemize}

\begin{remark}[Stationary condition of the  value function]\label{eq:Stationary condition of the  value function}
Let $V(t, y)$ be the value function in \eqref{eq:objective_functional_final}, this problem is an optimal stopping problem, and the value function must satisfy the HJB variational inequality. We solve this problem by treating the agent's age $t$ as a fixed parameter. This makes the instantaneous force of mortality $\delta_t$ and the effective discount rate $\instdiscount = \beta + \delta_t$ constants for a given HJB equation. The resulting value function $V(y; t)$ is thus 'stationary' with respect to the $y$ variable. For notational simplicity, we write $V(y)$ and $\instdiscount$, where $t$ is implicit.

\begin{equation} \label{eq:hjb_variational_y}
\max \left\{ \sup_{p, \kappa, b} \left[ u_1(\kappa, b) + \mathcal{L}^{p, \kappa, b} V(y) \right] - \instdiscount\, V(y), \quad G(y) - V(y) \right\} = 0,
\end{equation}
where the generator $\mathcal{L}^{p, \kappa, b} V(y)$ is defined as
\[
\mathcal{L}V(y) = V'(y) \left[ (r+\rho)y + p y(\mu-r) - \kappa(1+\rho y) + w b \right] + \frac{1}{2} V''(y) \sigma^2 p^2 y^2,
\]
and $G(y)$ is the value derived from annuitizing at wealth-to-habit ratio $y$.
\end{remark}

\begin{remark}[Structure of the optimal strategy]\label{eq:Structure_of_the_optimal_strategy}
The HJB variational inequality in \eqref{eq:hjb_variational_y} partitions the state space ($y \ge 0$) into two distinct regions, separated by the optimal retirement wealth threshold $y^*$.
\begin{enumerate}
    \item \textit{Continuation Region ($y < y^*$):} For wealth-to-habit ratios below the threshold, it is optimal to continue working. Here, $V(y) > G(y)$ and the value function solves the HJB equation
    \[  
    \instdiscount V(y) = \sup_{p, \kappa, b} \left[ u_1(\kappa, b) + \mathcal{L}^{p, \kappa, b} V(y) \right].
    \]
    \item \textit{Stopping (Annuitization) Region ($y \ge y^*$):} Once the ratio reaches or exceeds $y^*$, it is optimal to annuitize. The value function is equal to the annuitization value
   $ V(y) = G(y)$.
\end{enumerate}
\end{remark}

\begin{theorem}[Value Function]\label{thm:value_function}
Given that the wealth-to-habit ratio is $y_t = y$, assume the necessary regularity conditions in \ref{ass:admissibility_conditions} hold. The value function $V(y)$ is given by a piecewise function
\begin{equation}\label{eq:value_function_y}
V(y) =
\begin{cases}
V_{\mathrm{int}}(y), & \text{if } y < \tilde{y}, \\
V_{\bar{b}}(y), & \text{if } \tilde{y} \leq y < y^*, \\
G(y), & \text{if } y \geq y^*,
\end{cases}
\end{equation}
\noindent where $G(y)$ is the value function in the annuitization region, given by \eqref{eq:post_annuitization_value_1}. The functions $V_{\mathrm{int}}(y)$ and $V_{\bar{b}}(y)$ are the value functions corresponding to an interior solution for labor ($b < \bar{b}$) and a corner solution for labor ($b = \bar{b}$), respectively, in the continuation (working) region.
\end{theorem}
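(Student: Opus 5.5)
The plan is a verification argument built on the HJB variational inequality \eqref{eq:hjb_variational_y}, using the structure recorded in Remark~\ref{eq:Structure_of_the_optimal_strategy}. Throughout, the age $t$ is frozen as in Remark~\ref{eq:Stationary condition of the  value function}, so $\instdiscount$ is a constant and the problem is a stationary one-dimensional free-boundary problem in $y$.

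\textbf{Step 1: pointwise maximization.} Assuming $V'>0$ and $V''<0$ on $(0,y^*)$, I maximize the Hamiltonian in $(p,\kappa,b)$.
\begin{itemize}
\item The portfolio first-order condition gives $p^*y=-\theta V'/(\sigma V'')$.
\item For $\kappa$, the first-order condition gives $\kappa^{-\gamma}(\bar l-b)^{\psi(1-\gamma)}=V'(1+\tilde\rho y)$, truncated below at $\alpha$ by the habit constraint.
\item For labor, I combine the first-order condition $\psi\kappa^{1-\gamma}(\bar l-b)^{\psi(1-\gamma)-1}=wV'$ with the $\kappa$ condition. This yields $\bar l-b=\psi\kappa(1+\tilde\rho y)/w$, so that $b^*=\bar l-\psi\kappa(1+\tilde\rho y)/w$ whenever this lies in $[0,\bar b)$.
\end{itemize}
This maximization is what splits the continuation region. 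At the wealth level where the unconstrained $b^*$ reaches $\bar b$, which is the threshold $\tilde y$ of Definition~\ref{defn:subsistence_consumption_wealth_threshold}, the labor control switches from an interior value to the corner value $b=\bar b$.

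\textbf{Step 2: the two ODEs.} Substituting the maximizers into $\instdiscount V=\sup[\cdots]$ gives a fully nonlinear second-order ODE in each sub-region.
\begin{itemize}
\item On $(0,\tilde y)$, where $b^*<\bar b$, the solution is denoted $V_{\mathrm{int}}$.
\item On $[\tilde y,y^*)$, where $b=\bar b$ and $\kappa$ solves the reduced first-order condition, the solution is denoted $V_{\bar b}$.
\end{itemize}
To linearize each ODE, I apply the Legendre (dual) transform $\hat V(\lambda)=\sup_y\{V(y)-\lambda y\}$, with $y=-\hat V'(\lambda)$. This turns each equation into a linear second-order ODE with polynomial coefficients. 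Its general solution is a particular solution plus $C_1\lambda^{m_1}+C_2\lambda^{m_2}$, where $m_{1,2}$ are the roots of $\tfrac12\theta^2m(m-1)+(\instdiscount-r-\tilde\rho)m-\instdiscount=0$. The resulting value functions are therefore semi-explicit, up to finitely many constants.

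\textbf{Step 3: fixing the constants and free boundaries.} The unknowns are the constants together with $\tilde y$ and $y^*$. I determine them from four sets of conditions:
\begin{itemize}
\item a boundary condition at $y\to0$, given by solvency and the affordability of the minimum consumption plan $\kappa=\alpha$ (Assumption~\ref{ass:viability}), which eliminates the exploding root;
\item $C^1$ pasting of $V_{\mathrm{int}}$ and $V_{\bar b}$ at $\tilde y$;
\item value matching $V_{\bar b}(y^*)=G(y^*)$;
\item smooth pasting $V_{\bar b}'(y^*)=G'(y^*)$.
\end{itemize}

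\textbf{Step 4: verification.} I define the candidate $\tilde V$ by \eqref{eq:value_function_y}. It is $C^1$ everywhere and $C^2$ away from $y^*$, so the generalized It\^o formula applies to $e^{-\instdiscount s}\tilde V(y_s)$. For any admissible strategy, using $\sup[\cdot]-\instdiscount\tilde V\le0$ and $\tilde V\ge G$, together with localization and the growth bound \eqref{sufficient_condition_V} to remove the local martingale, I obtain $J\le\tilde V$. Equality holds for the feedback controls from Step 1 combined with $\tau^*$ of \eqref{eq:optimal_stopping_time}.

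\textbf{Main obstacle.} The hardest step will be checking the variational inequalities off their natural regions. On $y\ge y^*$ I must show $u_1+\mathcal L G-\instdiscount G\le0$, and on $y<y^*$ I must show $\tilde V\ge G$. Both reduce to sign conditions on explicit power functions of $ky$ and on the dual solution, and these appear to require parameter restrictions such as $k>r$ and $\bar l>1$. My first attempt will be a single-crossing argument: show that $\tilde V-G$ is decreasing and vanishes at $y^*$, and that $\mathcal L G-\instdiscount G+u_1$ is monotone in $y$. The remaining difficulty is establishing existence and uniqueness of the solution $(\tilde y,y^*)$ to the nonlinear pasting system. If that cannot be done in general, I will state it as an assumption and verify it numerically.
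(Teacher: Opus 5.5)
Step~1 does not deliver the ordering in \eqref{eq:value_function_y}. Within the HJB \eqref{eq:hjb_variational_y} you verify against, and for $\gamma>1$, it delivers the reverse ordering. Eliminating $\kappa$ from your two first-order conditions gives
\[
\bar l-b^*(y)=\Bigl(\frac{w}{\psi}\Bigr)^{\gamma/e}\bigl[V'(y)(1+\tilde\rho y)^{1-\gamma}\bigr]^{1/e},\qquad e:=\psi(1-\gamma)-\gamma .
\]
For $\gamma>1$ (e.g.\ the paper's $\gamma=2$) one has $e<0$. Under your standing assumption $V'>0$, $V''<0$, the bracket is strictly decreasing, so unconstrained labor is strictly \emph{decreasing} in $y$. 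The cap $b\le\bar b$ therefore binds on a \emph{lower} interval, and the interior regime lies above it. Your assertion that the control switches ``from an interior value to the corner value'' as $y$ increases through $\tilde y$ is thus unjustified. Step~4 also breaks for the candidate arranged as in \eqref{eq:value_function_y}. Since $b^*<\bar b$ on $(0,\tilde y)$ and $b^*$ is decreasing on $(0,y^*)$, the maximizer over $b\in[0,\bar b]$ on $(\tilde y,y^*)$ is not $\bar b$. Hence $\sup_{p,\kappa,b}[u_1+\mathcal L\tilde V]-\eta_t\tilde V>0$ there, and that is exactly the inequality you need for $J\le\tilde V$. To get the stated ordering you would have to show that $V'(y)(1+\tilde\rho y)^{1-\gamma}$ increases on $[\tilde y,y^*)$. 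This is impossible for concave $V$ when $\gamma>1$, and would need a separate argument when $\gamma<1$. The paper's proof does not supply this either: it simply posits Cases~I and~II. Its interior formula $b^*\propto y$ in Theorem~\ref{thm:optimal_policies_results} is also not what these first-order conditions give.

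Step~2 is also not valid as stated. Consumption enters the drift as $\kappa(1+\tilde\rho y)$, so the optimized Hamiltonian contains $1+\tilde\rho y$ raised to a power other than $0$ or $1$. In the corner region the term is $\bigl(V'(y)(1+\tilde\rho y)\bigr)^{(\gamma-1)/\gamma}$, and in the interior region it is $(1+\tilde\rho y)^{(1-\gamma)/e}V'(y)^{(1+e)/e}$. Under $y=-\hat V'(\lambda)$ the factor $1+\tilde\rho y$ becomes $1-\tilde\rho\hat V'(\lambda)$, so the dual equation is nonlinear in $\hat V'$. Consequently there are no Merton-type roots, and the constant count in Step~3 does not follow. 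The paper's inverse-function ODE has the same defect, since its coefficients depend on $Y$. Existence of the candidate therefore requires a genuine analysis of a nonlinear free-boundary system, or must be assumed outright. Finally, to compare with $J$ in \eqref{eq:objective_functional_habit} you must apply It\^o's formula to $e^{-\eta_t s}Z_s^{1-\gamma}\tilde V(y_s)$, not to $e^{-\eta_t s}\tilde V(y_s)$. This produces a term $(1-\gamma)\tilde\rho(\kappa-1)\tilde V$ that is absent from \eqref{eq:hjb_variational_y}. As written, Step~4 therefore bounds only a reduced objective without the $Z^{1-\gamma}$ weights.
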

\begin{proof} See Appendix~\ref{Proof_of_Theorem_1:_Value_Function}.
\end{proof}

\medskip
\noindent
Theorem~\ref{thm:value_function} establishes the qualitative structure of the value function and identifies the continuation and stopping regions in terms of the wealth-to-habit ratio. The remaining results in this section build on this structure by determining the free-boundary thresholds and deriving the corresponding optimal controls within the continuation region. In particular, we first characterize the annuitization and labor-constraint boundaries, and then derive the optimal consumption, labor, and portfolio policies conditional on non-annuitization.

\begin{proposition}[Optimal Retirement Wealth Threshold]\label{prop:optimal_threshold_y}
Let $V_{\bar{b}}(y)$ and $G(y)$ be the value functions for an agent in the final working phase (with $b=\bar{b}$) and the annuitized state, respectively, as defined in Theorem \ref{thm:value_function}. The optimal retirement wealth threshold, $y^*$, is determined by the \textit{value-matching} and \textit{smooth-pasting} conditions

\begin{align*}
    V_{\bar{b}}(y^*) &= G(y^*), \\
    V_{\bar{b}}'(y^*) &= G'(y^*),
\end{align*}

\noindent The threshold $y^*$ is the unique solution to the non-linear algebraic equation formed by substituting these conditions into the maximized HJB equation for the $V_{\bar{b}}$ region, as detailed in Appendix \ref{app:proof_y_star}.
\end{proposition}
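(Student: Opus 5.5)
The plan is to treat the problem at fixed age $t$, as in Remark~\ref{eq:Stationary condition of the  value function}, so that $\instdiscount$ is a constant and the HJB equation on $[\tilde y, y^*)$ is an ODE. The strategy is to solve that ODE in closed form, write the value-matching and smooth-pasting conditions as two equations in the unknowns, and then eliminate the constants so that a single scalar equation in $y^*$ remains.

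First I would solve the HJB equation on $[\tilde y,y^*)$, where $b=\bar b$. The first-order condition in $p$ gives $p^* y = -\theta V'/(\sigma V'')$. The first-order condition in $\kappa$, taken together with the floor $\kappa\ge\alpha$, gives $\kappa^*=\max\{\alpha,\,[(\bar l-\bar b)^{-\psi(1-\gamma)}(1+\tilde\rho y)V']^{-1/\gamma}\}$. Substituting these gives the maximized equation
\[
\instdiscount V=\tilde u(V',y)+V'\big[(r+\tilde\rho)y+w\bar b\big]-\tfrac{\theta^2}{2}\frac{(V')^2}{V''}.
\]
This equation is nonlinear. I would linearize it by the Legendre (convex-dual) transform $\hat V(q)=\sup_y[V(y)-qy]$, with $q=V'(y)$ and $y=-\hat V'(q)$. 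The result is a second-order linear ODE in $q$. Its general solution is a particular solution plus $A_1q^{\lambda_1}+A_2q^{\lambda_2}$, where $\lambda_{1,2}$ are the roots of $\tfrac{\theta^2}{2}\lambda(\lambda-1)+(\instdiscount-r-\tilde\rho)\lambda-\instdiscount=0$. Because of the $(1+\tilde\rho y)$ factor this step may not be fully explicit, but the statement only needs the structure: $V_{\bar b}$ is pinned down up to a finite set of constants. These are $A_1$ and $A_2$, together with the constants linking to $V_{\mathrm{int}}$ at $\tilde y$.

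Next I would impose the conditions at $y^*$. Since $G(y)=e^{-\rho_\tau}(ky)^{1-\gamma}/\rho_\tau$ is explicit, $G'(y^*)$ is known. Smooth pasting sets $q^*=V_{\bar b}'(y^*)=G'(y^*)$, which fixes the dual point, and value matching then gives a second relation. The key reduction is as follows. Value matching and smooth pasting give the values of $V_{\bar b}$ and $V_{\bar b}'$ at $y^*$. If I also assume $C^2$-fit, or use the fact that the free boundary is where the HJB holds with equality, then substituting into the maximized HJB at $y^*$ lets me replace $V,V',V''$ by $G,G',G''$. This leaves the scalar equation
\[
F(y^*):=\tilde u(G'(y^*),y^*)+G'(y^*)\big[(r+\tilde\rho)y^*+w\bar b\big]-\tfrac{\theta^2}{2}\frac{G'(y^*)^2}{G''(y^*)}-\instdiscount G(y^*)=0,
\]
in which $G$ is a power function. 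This is the equation the statement refers to. The remaining constants are then determined by value matching and smooth pasting, together with the matching conditions at $\tilde y$.

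The main obstacle is uniqueness. I would treat it by analyzing $F$ directly. For $y$ large, $F$ has a fixed sign: the stopping region is where $\sup[\cdots]-\instdiscount G\le 0$, using $k>r$ and Assumption~\ref{ass:viability}. For $y$ small, $F$ has the opposite sign, because labor income $w\bar b$ dominates. This gives existence by the intermediate value theorem. For uniqueness I would show that $F(y)/y^{1-\gamma}$ is strictly monotone. Every term of $F$ is a power of $y$ times a constant, except the $w\bar b$ term, which scales like $y^{-\gamma}$, and the consumption term on the unconstrained branch. Hence $F(y)/y^{1-\gamma}$ is a constant plus a strictly monotone term in $1/y$. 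If the consumption floor binds, the analysis has to be split according to whether $\kappa^*=\alpha$ at $y^*$, and monotonicity checked on each branch. This case split is where I expect the real difficulty to lie.
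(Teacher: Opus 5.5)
Your reduction to a scalar equation is the same step the paper's appendix takes, and it is where the argument fails. Value matching and smooth pasting fix $V_{\bar b}(y^*)$ and $V_{\bar b}'(y^*)$. But the maximized HJB on $[\tilde y,y^*)$ holds at $y^*$ only with $V_{\bar b}''(y^*-)$. Replacing this by $G''(y^*)$ is the super-contact condition. The appendix merely asserts it (``optimality and $C^2$ continuity also imply\dots''), and you adopt it as an extra hypothesis; in general it is false.

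Since $V_{\bar b}-G\ge 0$ on $[\tilde y,y^*)$ with a double zero at $y^*$, you only get $V_{\bar b}''(y^*-)\ge G''(y^*)$. The maximized Hamiltonian is strictly increasing in $V''<0$ through $-\tfrac{\theta^2}{2}(V')^2/V''$, so this gives $F(y^*)\le 0$, with equality only under $C^2$ fit. The zero of $F$ marks the edge of $\{F\le 0\}$, where stopping beats continuing for an instant (a myopic rule). The true $y^*$ typically lies strictly inside that set. A quick check: take $dY_t=dW_t$, running reward $-y$, obstacle $G\equiv 0$ and discount rate $\eta>0$. The analogue of your $F$ is $-y$, which vanishes at $0$. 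Value matching and smooth pasting for $V(y)=-y/\eta+Ae^{\sqrt{2\eta}\,y}$ instead give $y^*=1/\sqrt{2\eta}$.

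Counting confirms the problem. The paper's system \eqref{eq:sys1}--\eqref{eq:sys5} already has as many equations as unknowns $(\tilde y,y^*,A_2,B_1,B_2)$, so adding $F(y^*)=0$ overdetermines it. Determining $y^*$ requires solving \eqref{eq:sys4}--\eqref{eq:sys5} jointly with the continuation solution, not a local equation at $y^*$. Uniqueness is then a question about that coupled system, which neither you nor the paper addresses. Two minor points: use the appendix normalization $G(y)=(ky)^{1-\gamma}/(\instdiscount(1-\gamma))$. The main-text form you quote drops $1/(1-\gamma)$, which makes $G$ convex for $\gamma>1$, and then $-\tfrac{\theta^2}{2}G'^2/G''$ is no longer the supremum over $p$. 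Also, your Legendre step is neither linear, because of the $1+\tilde\rho y$ factor, nor needed.

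Your uniqueness argument goes beyond the paper, which only asserts uniqueness, but even as a statement about zeros of $F$ it needs repair. Let $L=(\bar l-\bar b)^{\psi(1-\gamma)}$. On the branch where the floor does not bind,
\[
\frac{F(y)}{y^{1-\gamma}}=k^{1-\gamma}\Big[\frac{r+\tilde\rho+\theta^2/(2\gamma)}{\instdiscount}-\frac{1}{1-\gamma}\Big]+\frac{k^{1-\gamma}w\bar b}{\instdiscount\, y}+\frac{\gamma L^{1/\gamma}}{1-\gamma}\Big(\frac{k^{1-\gamma}}{\instdiscount}\Big)^{\frac{\gamma-1}{\gamma}}(1+\tilde\rho y)^{\frac{\gamma-1}{\gamma}}.
\]
Both non-constant terms are strictly decreasing for either sign of $1-\gamma$, so your monotonicity idea works on this branch.

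The endpoint signs do not come from where you say. Large $y$:
\begin{itemize}
\item The first term is the only constant one, and $k$ enters it only through the positive factor $k^{1-\gamma}$, so $k>r$ cannot fix its sign. Appealing to the shape of the stopping region is circular.
\item For $\gamma>1$ the habit term drives the ratio to $-\infty$ regardless.
\item For $\gamma<1$ the floor binds at large $y$. You then need $\instdiscount>(1-\gamma)\big(r+\tilde\rho(1-\alpha)+\theta^2/(2\gamma)\big)$, which is not among the paper's assumptions.
\end{itemize}
Small $y$:
\begin{itemize}
\item As $y\to 0$, $G'(y)(1+\tilde\rho y)\to\infty$, so the floor $\kappa=\alpha$ always binds there. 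The $1/y$ coefficient becomes $k^{1-\gamma}(w\bar b-\alpha)/\instdiscount$.
\item Positivity near $0$ and monotonicity on the floor branch therefore need $w\bar b\ge\alpha$, with strict inequality for the sign when $\gamma>1$.
\item If $w\bar b<\alpha$ and $\gamma>1$, the ratio tends to $-\infty$ at both ends, so $F$ generically has zero or two roots.
\end{itemize}
The case split you anticipated is therefore not a technicality: it is where an extra hypothesis is required. Even with that hypothesis, you would have proved uniqueness of the myopic root, not of $y^*$.
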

\begin{proof} See Appendix~\ref{app:proof_y_star}.
\end{proof}

\begin{proposition}[Threshold for the Labor Constraint]\label{prop:threshold_y_tilde}
Let $V_{\mathrm{int}}(y)$ and $V_{\bar{b}}(y)$ be the value functions for the interior and corner labor solutions, respectively. The boundary between these states is a unique wealth-to-habit ratio, $\tilde{y}$, determined by the \textit{$C^2$ continuity} conditions
\begin{align*}
    V_{\mathrm{int}}(\tilde{y}) &= V_{\bar{b}}(\tilde{y}), \\
    V_{\mathrm{int}}'(\tilde{y}) &= V_{\bar{b}}'(\tilde{y}), \\
    V_{\mathrm{int}}''(\tilde{y}) &= V_{\bar{b}}''(\tilde{y}).
\end{align*}

\noindent This system of equations implicitly defines the threshold $\tilde{y}$ and the corresponding integration constants ($A_2, B_1, B_2$), as shown in Appendix \ref{Proof_of_Derivation_of_tilde_y}.
\end{proposition}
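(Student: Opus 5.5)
The plan is to make both branches of the continuation region explicit by solving the maximized HJB equation on each side of $\tilde y$, and then impose the three matching conditions at $\tilde y$. To linearize the problem I will use the dual (Legendre) transform. On each branch the first-order conditions in $p$, $\kappa$ and $b$ come from \eqref{eq:hjb_variational_y}. For the portfolio, $p y = -\theta V'/(\sigma V'')$. For consumption, $\kappa$ is obtained from $u_{1,\kappa} = V'(1+\tilde\rho y)$ and is truncated below at $\alpha$. For labor, $b_{\mathrm{int}}$ is given by $-u_{1,b} = wV'$ in the interior branch, and $b=\bar b$ in the corner branch. Substituting these back leaves a fully nonlinear ODE in $V$. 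I then introduce the dual variable $q = V'(y)$ and the transform $\hat V(q) = \sup_y\{V(y) - qy\}$. Under this change of variables each branch becomes a linear second-order ODE of Euler type in $q$, with a particular solution plus $q^{\lambda_\pm}$. Here $\lambda_\pm$ are the roots of $\tfrac12\theta^2\lambda(\lambda-1) + (\instdiscount - r - \tilde\rho)\lambda - \instdiscount = 0$, with a $q$-dependent source term coming from the consumption and labor terms. Inverting the transform gives $V_{\mathrm{int}}$ with one free constant $A_2$, after the other constant is eliminated by the behaviour as $y\downarrow 0$ (the solvency and floor requirement). It also gives $V_{\bar b}$ with two constants $B_1, B_2$.

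Next I write the three $C^2$ conditions at $\tilde y$. In dual variables these become continuity of $\hat V$, $\hat V'$ and $\hat V''$ at the dual point $\tilde q = V'(\tilde y)$. The key observation is that $\tilde q$ is pinned down independently of the constants. It is the point where the interior labor first-order condition gives exactly $b_{\mathrm{int}}(\tilde q) = \bar b$, namely $\psi(\bar l - \bar b)^{-1}u_1 = w\tilde q$ evaluated at the optimal $\kappa$. With $\tilde q$ fixed, the three conditions form a linear system in $(A_2, B_1, B_2)$, because the homogeneous solutions $q^{\lambda_\pm}$ enter linearly. I will show that the $3\times 3$ coefficient matrix is nonsingular. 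It is essentially a Wronskian-type determinant of $q^{\lambda_+}$, $q^{\lambda_-}$ and the difference of the particular solutions, and it is nonzero because $\lambda_+ > 1 > 0 > \lambda_-$. Since the source term at the switch point is continuous, the particular solutions agree up to first order there, so one of the three equations is automatic. This consistency is what allows three conditions to determine only three unknowns once the location is fixed by the labor first-order condition. Finally, $\tilde y = -\hat V'(\tilde q)$ is recovered and is unique.

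For uniqueness of $\tilde y$ I argue monotonicity. $b_{\mathrm{int}}$ is strictly monotone in $q$, by strict concavity of $u_1$ in $b$ together with $\psi>0$, and $V'$ is strictly decreasing in $y$ by concavity. So $b_{\mathrm{int}}(V'(y)) = \bar b$ has at most one crossing, and the matching point in $y$ is unique. The remaining constant is then fixed jointly with $y^*$ through Proposition~\ref{prop:optimal_threshold_y}.

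The main obstacle I expect is the interaction with the consumption floor $\kappa \ge \alpha$. If the switch to $\kappa = \alpha$ happens in the same region, the particular solution changes form and the matching system is no longer a clean linear system. I would first assume, or verify from parameters, that the floor binds only on one side of $\tilde y$. I would then treat any floor kink as a further $C^1$ junction handled separately, and check the sign of the determinant numerically under Assumption~\ref{ass:viability}.
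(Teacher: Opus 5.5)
Your central step, fixing $\tilde q$ and then solving a nonsingular $3\times 3$ linear system for $(A_2,B_1,B_2)$, fails even on its own terms. You claim both that the matrix is nonsingular and that one of the three matching equations is automatic. These are incompatible, since an automatic equation leaves rank at most two. The redundancy is the correct half: at the switch point the maximized Hamiltonians of the two branches coincide, so once $\hat V$ and $\hat V'$ match, the equation forces $\hat V''$ to match.

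Within your setup the nonsingularity claim is false outright. Both branches share the roots $\lambda_\pm$, so the column multiplying $A_2$ coincides, up to sign, with the column multiplying $B_1$ or $B_2$. The unknowns therefore enter only through two combinations, such as $B_1$ and $A_2-B_2$. The particular solutions do not enter the coefficient matrix at all. A determinant built from $q^{\lambda_+}$, $q^{\lambda_-}$ and the difference of the particular solutions is the compatibility condition for three equations in those two combinations, so it must vanish. Given your continuity of the source at $\tilde q$, it does vanish: the three $2$-jets at $\tilde q$ are all annihilated by the same nonzero functional $f\mapsto\tfrac12\theta^2\tilde q^2f''+(\instdiscount-r-\tilde\rho)\tilde q f'-\instdiscount f$. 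If it were nonzero, as you assert, the system would have no solution.

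Hence the conditions at $\tilde y$ give at most two independent relations. They cannot determine $\tilde y$ together with $(A_2,B_1,B_2)$. The location must come from the switch condition $b_{\mathrm{int}}=\bar b$, and the last constant from $y^*$. For comparison, the paper proves neither solvability nor uniqueness. It writes the three $C^2$ conditions for $A_2y^{m_2}+V_{p,\mathrm{int}}$ and $B_1y^{m_1'}+B_2y^{m_2'}+V_{p,\bar b}$ with branch-specific roots, appends value matching and smooth pasting at $y^*$, and solves the five equations numerically.

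The inputs to that step are also wrong for this model.
\begin{itemize}
\item \textbf{Linearization.} The consumption term in the reduced HJB is $-\kappa(1+\tilde\rho y)V'$. After the Legendre transform, the maximized consumption--labor Hamiltonian therefore depends on $q\bigl(1-\tilde\rho\hat V'(q)\bigr)$, not on $q$ alone. Wherever $\kappa>\alpha$ it is a non-affine (power-type) function of this quantity, so the dual equation is nonlinear and there is no superposition of a particular solution with $q^{\lambda_\pm}$. Your characteristic equation simply drops the $-\kappa\tilde\rho yV'$ term. Even on the floor $\kappa=\alpha$, where the equation is linear, that term changes the coefficient $r+\tilde\rho$ to $r+\tilde\rho(1-\alpha)$.
\item \textbf{Location of $\tilde q$.} The same factor breaks the claim that $\tilde q$ is independent of the constants. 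Dividing the two first-order conditions gives $\psi\kappa/(\bar l-b)=w/(1+\tilde\rho y)$. Inserting this into $u_{1,\kappa}=q(1+\tilde\rho y)$ at $b=\bar b$ gives $\tilde q=\bigl(w(\bar l-\bar b)/\psi\bigr)^{-\gamma}(\bar l-\bar b)^{\psi(1-\gamma)}(1+\tilde\rho\tilde y)^{\gamma-1}$. So $\tilde q$ is tied to $\tilde y=-\hat V'(\tilde q)$, and hence to the constants, unless the floor binds at the switch. Separately, $-u_{1,b}=\psi(1-\gamma)u_1/(\bar l-b)$; your expression lacks the factor $1-\gamma$ and has the wrong sign for $\gamma>1$.
\item \textbf{Uniqueness.} Your argument assumes concavity of $V$, which is not established in a problem with discretionary stopping. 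It also ignores the explicit $y$-dependence of $b_{\mathrm{int}}$. From the same two conditions, $(\bar l-b)^{-\gamma-\psi(\gamma-1)}=(w/\psi)^{\gamma}q(1+\tilde\rho y)^{1-\gamma}$. For $\gamma<1$ the two factors move in opposite directions as $y$ grows. For $\gamma>1$ with $V$ concave, $b_{\mathrm{int}}$ is decreasing in $y$, which places the corner region $b=\bar b$ \emph{below} the crossing. That is the opposite of the configuration in the statement, so this monotonicity argument cannot establish the proposition as stated.
\end{itemize}
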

\begin{proof} See Appendix~\ref{Proof_of_Derivation_of_tilde_y}.
\end{proof}

\medskip
\noindent
We now collect the preceding results to characterize the optimal consumption, labor, and portfolio controls conditional on non-annuitization. The following theorem summarizes the optimal policies within the continuation region \( y < y^* \), taking the value function structure and free-boundary thresholds as given. The global optimality of these controls, together with the optimal stopping rule, is established subsequently via a verification argument.

\begin{theorem}[Optimal Policies in the Continuation Region]\label{thm:optimal_policies_foc}
Let $V(y)$ in Theorem \ref{thm:value_function} be a twice-continuously differentiable solution to the HJB variational inequality in \eqref{eq:hjb_variational_y} for $y < y^*$. The optimal policies are given by the following first-order conditions
\begin{enumerate}
    \item \textit{Optimal Portfolio:} The optimal portfolio weight invested in the risky asset is
    \begin{equation}
        p^*(y) = -\frac{\mu-r}{\sigma^2} \frac{V'(y)}{y V''(y)}. \label{eq:foc_p_main}
    \end{equation}

    \item \textit{Optimal Consumption-to-Habit Ratio:} The optimal ratio $\kappa^*$ is determined by the condition that its marginal utility equals the marginal value of wealth, adjusted for its impact on the state variable dynamics
    \begin{equation}\label{eq:optimal_consumption_policy_foc}
        \frac{\partial u(\kappa^*, b^*)}{\partial \kappa} = V'(y)(1 + \rho y).
    \end{equation}

    \item \textit{Optimal Labor:} For an interior solution where $0 < b^* < \bar{b}$, the optimal labor supply $b^*$ satisfies the condition that the marginal disutility of labor equals its marginal contribution to the value of wealth
    \begin{equation}
        \frac{\partial u(\kappa^*, b^*)}{\partial b} = -w V'(y). \label{eq:foc_b_main}
    \end{equation}
    If this condition cannot be satisfied for any $b \in (0, \bar{b})$, the optimum is a corner solution at the boundary, $b^* = \bar{b}$.

    \item \textit{Marginal Rate of Substitution:} Combining the conditions for optimal consumption and labor for an interior solution yields the optimality condition where the marginal rate of substitution between consumption and leisure equals the wage rate, adjusted by a state-dependent factor
    \begin{equation}
        \frac{\partial u(\kappa^*, b^*) / \partial b}{\partial u(\kappa^*, b^*) / \partial \kappa} = -\frac{w}{1 + \rho y}. \label{eq:MRS_condition_main}
    \end{equation}
\end{enumerate}
\end{theorem}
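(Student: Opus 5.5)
The plan is to work directly with the maximized HJB equation in the continuation region. By Remark~\ref{eq:Structure_of_the_optimal_strategy}, for $y<y^*$ we have $\instdiscount V(y)=\sup_{p,\kappa,b}\{u_1(\kappa,b)+\mathcal{L}^{p,\kappa,b}V(y)\}$. The Hamiltonian separates additively into three pieces. The first, $p\,y(\mu-r)V'(y)+\tfrac12\sigma^2p^2y^2V''(y)$, depends only on $p$. The second, $u(\kappa,b)-\kappa(1+\rho y)V'(y)$, depends on $(\kappa,b)$. The third, $wbV'(y)$, is linear in $b$. The pointwise supremum can therefore be computed one control at a time, using first-order conditions justified by concavity. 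Throughout I will use the standing properties that $V$ is $C^2$, $V'(y)>0$ (more wealth relative to habit is better), and $V''(y)<0$. These follow from the structure in Theorem~\ref{thm:value_function} together with concavity of $G$ and of the utility in $\kappa$.

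\emph{Portfolio.} The $p$-part is a quadratic whose leading coefficient $\tfrac12\sigma^2y^2V''(y)$ is negative. Its unique maximizer is obtained by setting the derivative $y(\mu-r)V'+\sigma^2py^2V''$ to zero, which gives \eqref{eq:foc_p_main}. No constraint on $p$ is imposed in $\Pi$, so this maximizer is admissible.

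\emph{Consumption.} For fixed $b$, the map $\kappa\mapsto u(\kappa,b)-\kappa(1+\rho y)V'(y)$ is strictly concave, since $u\propto\kappa^{1-\gamma}(\bar l-b)^{\psi(1-\gamma)}/(1-\gamma)$. It satisfies an Inada condition at $0$ and has derivative tending to $-(1+\rho y)V'<0$ at infinity. Hence the unconstrained maximizer is the unique root of \eqref{eq:optimal_consumption_policy_foc}. The habit floor $\kappa\ge\alpha$ would turn this into $\max\{\alpha,\cdot\}$. I would note this, and either assume the floor is slack in the region considered or state the FOC as the interior characterization. This is the main point needing care.

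\emph{Labor and MRS.} For fixed $\kappa$, the $b$-part is $u(\kappa,b)+wbV'(y)$ on $[0,\bar b]$. I will check that it is strictly concave in $b$. Its second derivative is $\psi(\bar l-b)^{-2}\kappa^{1-\gamma}(\bar l-b)^{\psi(1-\gamma)}[\psi(1-\gamma)-1]$, up to positive factors. This is negative provided $\psi(1-\gamma)<1$, which holds automatically for $\gamma>1$ and is otherwise a mild parameter restriction I would add. Under strict concavity, an interior maximizer satisfies \eqref{eq:foc_b_main}. If the derivative $u_b+wV'$ is still nonnegative at $b=\bar b$, the maximizer is the corner $b^*=\bar b$; this matches Definition~\ref{defn:subsistence_consumption_wealth_threshold} and the $V_{\bar b}$ branch. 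Finally, dividing \eqref{eq:foc_b_main} by \eqref{eq:optimal_consumption_policy_foc}, which is legitimate because $u_\kappa>0$, yields \eqref{eq:MRS_condition_main}.

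The main obstacle is not the algebra but justifying that the FOCs characterize the supremum, that is, the sign conditions $V'>0$ and $V''<0$ and the joint concavity in $(\kappa,b)$. I would first try to obtain the signs from the explicit piecewise forms of $V_{\mathrm{int}}$ and $V_{\bar b}$ supplied by Theorem~\ref{thm:value_function}. Failing that, I would state them as part of the regularity hypothesis, deferring global optimality to the verification argument.
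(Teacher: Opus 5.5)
Your proposal is correct and follows essentially the paper's route: the paper gives no separate proof of this theorem and simply reads the four conditions off as first-order conditions of the pointwise maximization of the Hamiltonian in \eqref{eq:hjb_variational_y}, deferring optimality to the later verification argument, so your extra care about the habit floor $\kappa\ge\alpha$, the corner $b^*=\bar b$, and concavity already goes beyond it. One refinement: as stated these are necessary conditions at an interior optimum and need no concavity (indeed $V'>0$ and $V''<0$ are forced by finiteness and attainment of the supremum in the HJB), whereas if you want them to \emph{determine} the optimum when $\gamma<1$, the relevant requirement is joint concavity of $u$ in $(\kappa,b)$, i.e.\ $(1-\gamma)(1+\psi)\le 1$, which is strictly stronger than your slice condition $\psi(1-\gamma)<1$.
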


\begin{corollary}[Actuarially Fair Annuity Rate]\label{cor:actuarially_fair_annuity_rate}
The parameter $k$ is the annuitization value function $G(y)$ in \eqref{eq:objective_functional_habit}, which represents the optimal consumption-to-wealth ratio $c_t/X_t$ upon retirement, can be interpreted as the actuarially fair annuity rate. For an agent of age $t$, this rate is the reciprocal of the annuity present value $\ddot{a}_t$
\begin{equation}
k = \frac{1}{\ddot{a}_t},
\end{equation}
where $\ddot{a}_t$ is the present value of a continuous life annuity paying 1 unit per year, given by
\begin{equation}
\ddot{a}_t = \int_0^\infty e^{-\beta s} \, \px[s]{n}[(\delta)] \, ds = \int_0^\infty e^{-\beta s} \exp\left(-\int_t^{t+s} \delta_u du\right) ds.
\end{equation}
Here, $\beta > 0$ is the subjective discount rate, $\px[s]{n}[(\delta)]$ is the survival probability for an individual aged $t$ to survive $s$ more years, and $\delta_u$ is the age-dependent force of mortality at age $u$.
\end{corollary}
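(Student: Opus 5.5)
The plan is to derive the annuity rate $k$ from a budget-balance (no-arbitrage pricing) argument, and then confirm that it is the consumption-to-wealth ratio that the annuitization value function $G$ encodes. At annuitization time $\tau$ the agent, currently of age $t$, surrenders the fund $X_\tau$. In return the insurer pays the constant continuous income stream $kX_\tau$ for as long as the agent is alive. Under the deterministic Gompertz force of mortality of Section~\ref{Deterministic_Force_of_Mortality}, the insurer pools mortality risk and discounts at $\beta$ (the discount rate appearing in $\ddot{a}_t$). Its liability is therefore the expected discounted value of this stream,
\begin{equation*}
\mathbb{E}\left[\int_0^{T} e^{-\beta s}\, kX_\tau\, ds\right] = kX_\tau \int_0^\infty e^{-\beta s}\, P(T>s)\, ds ,
\end{equation*}
where $T$ is the residual lifetime. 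The equality follows by writing the integral up to $T$ as $\int_0^\infty \mathbf{1}_{\{T>s\}}\,ds$ and applying Tonelli's theorem, which is valid because the integrand is nonnegative.

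The next step is to identify the survival probability. By \eqref{eq:age_dependent_mortality} and the multiplicative property \eqref{eq:deterministic_survival} of deterministic survival probabilities, the conditional survival from age $t$ to age $t+s$ is $\exp\!\left(-\int_t^{t+s}\delta_u\,du\right)$, which equals $\px[s]{n}[(\delta)]$. The insurer's liability is thus $kX_\tau\,\ddot{a}_t$. Actuarial fairness, meaning zero expected profit, requires this to equal the premium $X_\tau$. Hence $k\,\ddot{a}_t=1$, which gives $k=1/\ddot{a}_t$.

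Two supporting checks complete the argument. First, $\ddot{a}_t$ must be finite and positive. Positivity is immediate. Finiteness follows from the bound $\ddot{a}_t\le\int_0^\infty e^{-\beta s}\,ds=1/\beta$, since the survival probability is at most one. Because $\delta_u>0$, the inequality is in fact strict, $\ddot{a}_t<1/\beta$, so $k>\beta$. Second, I would link $k$ to the post-retirement consumption ratio. After annuitization the agent has no risky position and no other income, so consuming the annuity income $c=kX$ is the only feasible strategy that neither accumulates nor exhausts the fund. Substituting $c=kX$ into $U$ reproduces the factor $(ky)^{1-\gamma}$ in $G$ of \eqref{eq:post_annuitization_value_1}. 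This justifies calling $k$ the consumption-to-wealth ratio at retirement.

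The main obstacle is interpretive rather than technical. The statement discounts the annuity at $\beta$, whereas a market price would discount at $r$, and the model's assumption $k>r$ also has to be reconciled with this definition. I would handle this by stating explicitly that the annuity is priced with the agent's own discount rate and the Gompertz mortality law. The condition $k>r$ then becomes a parameter restriction: it holds whenever $\ddot{a}_t<1/r$, which is guaranteed for instance when $\beta\ge r$, since $\ddot{a}_t<1/\beta\le 1/r$. The time-inhomogeneity of $\delta_u$ causes no difficulty, because age $t$ is frozen as a parameter, as in Remark~\ref{eq:Stationary condition of the  value function}.
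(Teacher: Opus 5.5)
The paper states this corollary without any proof; it is asserted as an interpretation of the exogenous parameter $k$. There is therefore no argument of the paper's to compare with, and your equivalence-principle derivation supplies the missing justification correctly. You exchange the premium $X_\tau$ for the life income $kX_\tau$, compute the expected discounted payout via $\int_0^T e^{-\beta s}\,ds=\int_0^\infty e^{-\beta s}\mathbf{1}_{\{T>s\}}\,ds$ and Tonelli, and equate the two; that is precisely what ``actuarially fair'' means. Pulling $kX_\tau$ out of the expectation uses that the residual lifetime is independent of $\mathcal{F}_\tau$, which the paper assumes. Compared with the bare assertion, your route makes the pricing conventions explicit. It exposes that the statement discounts at $\beta$ while the premium computation in Section~\ref{Numerical_Implementation} discounts at $r$. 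It also yields the clean sufficient condition $\beta\ge r$ for the standing assumption $k>r$.

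Three points deserve tightening.

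\emph{First, the survival notation.} By \eqref{eq:age_dependent_mortality}, $\px[s]{n}[(\delta)]=\exp(-\int_0^s\delta_y\,dy)$ is survival measured from time $0$. What you actually use is the conditional survival $p(t;t+s;\delta_t)$ of \eqref{eq:deterministic_survival}. Your ``which equals $\px[s]{n}[(\delta)]$'' therefore holds only if the subscript is read as the current age $n+t$, a slip inherited from the statement.

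\emph{Second, the reading of annuitization.} Your identification of $k$ with the post-retirement consumption ratio rests on treating annuitization as a pure life annuity with no risky position. That reading is consistent with $G$. It is not consistent with Theorem~\ref{thm:optimal_policies_results}, which prescribes the Merton weight $(\mu-r)/(\sigma^2\gamma)$ for $y\ge y^*$ and calls $c=kX_t$ a Merton rule. Under that reading the optimal consumption-to-wealth ratio would be $\frac{\eta_t}{\gamma}-\frac{1-\gamma}{\gamma}\bigl(r+\frac{\theta^2}{2\gamma}\bigr)$, which generally differs from $1/\ddot a_t$. Say explicitly that yours is the interpretation under which the two descriptions of $k$ in the statement agree.

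\emph{Third, the match with $G$.} You can do better than matching the factor $(ky)^{1-\gamma}$. The annuitant's expected discounted utility is $U(kX_\tau,0)\,\ddot a_t=U(kX_\tau,0)/k$. Under the frozen-age convention, $\ddot a_t=1/\eta_t$, so $k=\eta_t$. After dividing by $Z^{1-\gamma}$, this is exactly the appendix form $G(y)=(ky)^{1-\gamma}/(\eta_t(1-\gamma))$ when $\bar l=1$; otherwise a factor $\bar l^{\psi(1-\gamma)}$ appears.

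This also shows your closing remark is slightly off. Freezing age is what makes $k$ a constant rather than the random $1/\ddot a_\tau$. But if applied consistently, it collapses the Gompertz integral to $1/\eta_t$. With the genuine Gompertz path, $\delta_u$ is increasing, so $k>\eta_t$ and the factor $1/\eta_t$ in $G$ is only an approximation.
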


\begin{corollary}[Optimal Annuity Payment]\label{cor:optimal_annuity_payment_rate}

Given optimal policies as in Theorem \ref{thm:optimal_policies_results}, with the optimal retirement wealth threshold $y^*$, the \textit{optimal annuity payment rate} $ \tilde{k}_t^*$ is 
\begin{equation}
k_t^* =
\begin{cases}
0, & \text{if }  y < y^* \quad \text{(working period)}, \\[6pt]
\phi y, & \text{if } y \geq y^* \quad \text{(full retirement period)},
\end{cases}
\end{equation}
where $\phi$ is the endogenous withdrawal rate and satisfies
\begin{equation}
\phi = \frac{\kappa_t^*}{y}= k,
\end{equation}
with $\kappa_t^*$ given by the optimal consumption policy in~\eqref{eq:optimal_consumption_policy_aligned}.
\end{corollary}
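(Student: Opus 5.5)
The plan is to treat the two regimes of Corollary~\ref{cor:optimal_annuity_payment_rate} separately, using the structure in Theorem~\ref{thm:value_function} and the stopping rule \eqref{eq:optimal_stopping_time}.

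\emph{Working period, $y<y^*$.} By \eqref{eq:optimal_stopping_time} and Remark~\ref{eq:Structure_of_the_optimal_strategy}, a state with $y_t<y^*$ lies in the continuation region, so $t<\tau^*_y$. No conversion of wealth into an annuity has taken place yet. The irreversible contract of \cite{gerrard2012choosing} pays $kX$ only from time $\tau$ onward, so the annuity stream is identically zero, and $k_t^*=0$ follows directly from the definition of the stopping time.

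\emph{Full retirement period, $y\ge y^*$.} Here the agent has annuitized, and $V=G$ with $G(y)=e^{-\rho\tau}(ky)^{1-\gamma}/\rho_\tau$ from \eqref{eq:post_annuitization_value_1}. I would identify the retirement consumption policy from the structure of $G$. The construction in the Annuitization Value Function remark makes $G$ the utility of a perpetual consumption stream $c=kX$, discounted over the remaining life. Equivalently, in habit-normalized units the consumption-to-habit ratio is $\kappa=c/Z=kX/Z=ky$. Along the stream no trading or labor is needed: the annuity income exactly finances consumption, with $b=0$ because the utility \eqref{eqn:utility_fun_1} is evaluated at leisure $\bar l$.

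To tie this to the optimal consumption rule \eqref{eq:optimal_consumption_policy_aligned}, I would check the first-order condition \eqref{eq:optimal_consumption_policy_foc} against $V=G$. Using $G'(y)=(1-\gamma)e^{-\rho\tau}k^{1-\gamma}y^{-\gamma}/\rho_\tau$ together with the marginal utility $\kappa^{-\gamma}$ up to the same normalization, the condition solves to $\kappa^*$ proportional to $ky$. The normalizing constants are fixed so that the proportionality constant is one, because $G$ was built as the value of consuming $kX$. Hence $\phi=\kappa^*/y=k$, and $k_t^*=\phi y=ky$, which is the habit-scaled annuity payment $kX_\tau/Z$.

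Finally, I would remark, via Corollary~\ref{cor:actuarially_fair_annuity_rate}, that $k=1/\ddot a_t$, so $\phi$ is pinned down by the age at annuitization.

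\emph{Main obstacle.} The main difficulty is the matching of normalization constants between $u$ and $G$. These include the $(1-\gamma)$ rescaling, the $1/\rho_\tau$ factor and the leisure weight $\bar l^{\psi(1-\gamma)}$, and together they make the first-order-condition route yield $\kappa^*=ky$ exactly rather than up to a constant. I would sidestep this by arguing directly from the contract: post-annuitization consumption equals the annuity income $kX_\tau$ by construction. The first-order-condition computation would then serve only as a consistency check.
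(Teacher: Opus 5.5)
Your proposal is correct and follows essentially the paper's route. The paper reads the corollary off the third branch of \eqref{eq:optimal_consumption_policy_aligned}, which its appendix justifies exactly as you do: it takes post-annuitization consumption to be the contract payment $c_t^*=kX_t$ and divides by $Z_t$ to get $\kappa_t^*=ky$, hence $\phi=k$, while $k_t^*=0$ for $y<y^*$ follows immediately from $\tau^*$.

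You are right to discard the first-order-condition ``consistency check'', but for a stronger reason than normalization. \eqref{eq:optimal_consumption_policy_foc} holds only in the continuation region. Applying it to $G$ under any consistent normalization of $u$ and $G$ gives $\kappa\propto y\,(1+\tilde\rho y)^{-1/\gamma}$, which is not linear in $y$, so it cannot confirm $\kappa^*=ky$ and should simply be omitted.
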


\medskip
\noindent
We now state the main verification result, which combines the value function characterization, free-boundary conditions, and continuation-region controls to establish the globally optimal policies and stopping rule.

\begin{theorem}[Optimal Policies]\label{thm:optimal_policies_results}
Assume the value function $V(y)$ from Theorem~\ref{thm:value_function} exists. The optimal policies are functions of the state variable $y_t = X_t/Z_t$ and are given by
\begin{align}
    \kappa_t^* &=
    \begin{cases}
        \left( \dfrac{V'_{\mathrm{int}}(y)(1+\rho y)}{ ((\bar{l} - b_t^*)^{\psi})^{1-\gamma} } \right)^{-1/\gamma}, & \text{if } y < \tilde{y}, \\[2ex]
        \left( \dfrac{V'_{\bar{b}}(y)(1+\rho y)}{ ((\bar{l} - \bar{b})^{\psi})^{1-\gamma} } \right)^{-1/\gamma}, & \text{if } \tilde{y} \leq y < y^*, \\[2ex]
        k y, & \text{if } y \geq y^*,
    \end{cases} \label{eq:optimal_consumption_policy_aligned} \\[2ex]
    b_t^* &=
    \begin{cases}
        \left( \dfrac{1 - \alpha}{w\alpha} \right) \left( \dfrac{k^{1 - \gamma}}{\instdiscount} \right)^{-1/\gamma} y, & \text{if } y < \tilde{y}, \\[2ex]
        \bar{b}, & \text{if } \tilde{y} \leq y < y^*, \\[1ex]
        0, & \text{if } y \geq y^*,
    \end{cases} \label{eq:Optimal_Labor_Supply_main} \\[2ex]
    p_t^* &=
    \begin{cases}
        -\dfrac{\mu-r}{\sigma^2} \dfrac{V'_{\mathrm{int}}(y)}{y V''_{\mathrm{int}}(y)}, & \text{if } y < \tilde{y}, \\[2ex]
        -\dfrac{\mu-r}{\sigma^2} \dfrac{V'_{\bar{b}}(y)}{y V''_{\bar{b}}(y)}, & \text{if } \tilde{y} \leq y < y^*, \\[2ex]
        \dfrac{\mu-r}{\sigma^2 \gamma}, & \text{if } y \geq y^*,
    \end{cases} \label{Optimal_Investment_main_aligned} \\[2ex]
    \tau^* &= \inf\{ t \geq 0 : y_t \geq y^* \}.   \label{Optimal_Stopping_main_aligned}
\end{align}

\noindent where $V'_{\mathrm{int}}(y)$, $V'_{\bar{b}}(y)$, $V''_{\mathrm{int}}(y)$ and $V''_{\bar{b}}(y)$ represent the first and second derivatives of the transformed value function with respect to $y$ in the respective regions defined in Theorem~\ref{thm:value_function}.
\end{theorem}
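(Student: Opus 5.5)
The proof has two parts. First, I will derive the feedback controls from the pointwise maximization in the HJB variational inequality \eqref{eq:hjb_variational_y}. Second, I will run a verification argument showing that these controls, together with the hitting time $\tau^*$, attain $V$ from Theorem~\ref{thm:value_function}.

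\emph{Step 1: the controls in the continuation region $y<y^*$.} Here I take $V=V_{\mathrm{int}}$ on $y<\tilde y$ and $V=V_{\bar b}$ on $\tilde y\le y<y^*$. I assume $V'>0$ and $V''<0$, which I will check from the explicit form of the solutions and from concavity of $G$ near $y^*$.

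The Hamiltonian is strictly concave and quadratic in $p$. Its first-order condition is \eqref{eq:foc_p_main}, which gives the two continuation branches of \eqref{Optimal_Investment_main_aligned}.

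For $\kappa$, the condition \eqref{eq:optimal_consumption_policy_foc} with $u_1(\kappa,b)=\kappa^{1-\gamma}((\bar l-b)^{\psi})^{1-\gamma}/(1-\gamma)$ reads $\kappa^{-\gamma}((\bar l-b)^\psi)^{1-\gamma}=V'(y)(1+\rho y)$. Inverting it gives the first two lines of \eqref{eq:optimal_consumption_policy_aligned}. The habit floor $\kappa\ge\alpha$ must be checked separately; if it binds, $\kappa^*=\alpha$.

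For $b$, I use \eqref{eq:foc_b_main} and eliminate $V'$ through \eqref{eq:MRS_condition_main}. This yields $\psi\kappa/(\bar l-b)=w/(1+\rho y)$, so $b$ is affine in $\kappa(1+\rho y)$. I then substitute the candidate value function in the interior branch. This should reduce $b^*$ to the linear expression in \eqref{eq:Optimal_Labor_Supply_main}, with the constant $(k^{1-\gamma}/\instdiscount)^{-1/\gamma}$ coming from the matching with $G$ through the constants fixed in Propositions~\ref{prop:optimal_threshold_y} and~\ref{prop:threshold_y_tilde}.

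Because the Hamiltonian is concave in $b$, whenever the interior root exceeds $\bar b$ the maximizer is the corner $b^*=\bar b$. By Definition~\ref{defn:subsistence_consumption_wealth_threshold}, this happens exactly on $[\tilde y,y^*)$.

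\emph{Step 2: the stopping region $y\ge y^*$.} Here $V=G=(ky)^{1-\gamma}/\rho_\tau$ up to the discount factor, and no labor is supplied, so $b^*=0$. The post-annuitization problem is a Merton problem, so its standard solution gives consumption equal to the annuity income $ky$ (consistent with Corollary~\ref{cor:optimal_annuity_payment_rate}) and the Merton weight $(\mu-r)/(\sigma^2\gamma)$.

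\emph{Step 3: verification.} Let $\hat V$ be the function in \eqref{eq:value_function_y}. It is $C^2$ across $\tilde y$ by Proposition~\ref{prop:threshold_y_tilde}, and $C^1$ across $y^*$ with bounded second derivative nearby by Proposition~\ref{prop:optimal_threshold_y}.

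For an arbitrary admissible $(p,\kappa,b,\tau)$, I apply the Itô–Tanaka (generalized Itô) formula to $e^{-\rho_s}Z_s^{1-\gamma}\hat V(y_s)$ up to $\tau\wedge\tau_n$, where $\tau_n$ is a localizing sequence. The HJBVI gives $\sup[\cdot]-\instdiscount\hat V\le0$ and $\hat V\ge G$. Taking expectations, the stochastic integral vanishes by localization, so $J\le\hat V$ after letting $n\to\infty$.

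To justify the limit, I use monotone convergence for the utility integral and the growth condition \eqref{sufficient_condition_V}. The growth condition supplies uniform integrability of $e^{-\rho_{\tau_n}}\hat V(y_{\tau_n})$, and a transversality condition follows from $\beta+\delta_t>0$ and the CRRA form.

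For the candidate controls and $\tau^*$, every inequality becomes an equality. This holds because the sup is attained pointwise and, on $\{y<y^*\}$, the HJB holds with equality. Hence $J=\hat V=V$.

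\emph{Main obstacle.} I expect two difficulties. The first is showing that the closed loop SDE for $y$ under the feedback controls has a solution that stays nonnegative and respects $\kappa^*\ge\alpha$. The second is checking the global inequalities of the HJBVI, namely $\hat V\ge G$ on $y<y^*$ and $\mathcal L G-\instdiscount G+u_1\le0$ on $y\ge y^*$. I would attempt the latter by a direct computation on $G$, using $k>r$ and Assumption~\ref{ass:viability}, and the former by a maximum-principle argument on $\hat V-G$ over $(0,y^*)$, using the smooth-fit conditions at $y^*$.

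The time dependence of $\instdiscount$, which the paper freezes in the stationary formulation of Remark~\ref{eq:Stationary condition of the  value function}, is handled the same way. I treat the verification at fixed age $t$ and interpret the policies as the quasi-stationary feedback rules of that remark.
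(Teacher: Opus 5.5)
Your Steps 1--2 are the paper's own argument. The paper reads off the first-order conditions region by region, asserts the interior labor formula without deriving it, and performs no actual verification. The problems lie in the two places where you go beyond that.

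\emph{The labor step.} Your MRS computation gives $b^*=\bar l-\psi\kappa^*(1+\rho y)/w$. This depends on $\psi$, on $\bar l$, and (through $\kappa^*$) on $V'_{\mathrm{int}}$. The target is $\frac{1-\alpha}{w\alpha}\bigl(\frac{k^{1-\gamma}}{\instdiscount}\bigr)^{-1/\gamma}y=\frac{1-\alpha}{w\alpha}(G'(y))^{-1/\gamma}$. It contains the habit parameter $\alpha$ and has $G'$ where the FOC has $V'_{\mathrm{int}}$. Smooth pasting equates $V'$ and $G'$ only at $y^*$, so substituting the candidate value function cannot produce this expression on $(0,\tilde y)$.

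In fact no argument can produce it. At $y=0$ the drift in \eqref{eq:y_dynamics_final_labor_corrected} is $-\kappa+wb$, so keeping $y_t\ge0$ requires $wb\ge\kappa\ge\alpha$ there. The stated formula instead gives $b^*\to0$ as $y\downarrow0$.

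Your claim that the cap binds exactly on $[\tilde y,y^*)$ ``by Definition~\ref{defn:subsistence_consumption_wealth_threshold}'' is circular. Combining the two FOCs, leisure $L=\bar l-b$ satisfies $L^{-(\gamma+\psi(\gamma-1))}=(w/\psi)^{\gamma}V'(y)(1+\rho y)^{1-\gamma}$. So for $\gamma>1$ with $V'>0>V''$, unconstrained labor is decreasing in $y$. The cap would then bind on an interval at the low end of $(0,y^*)$, the reverse of the ordering you need.

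\emph{The verification.} Step 3 does not close with the HJBVI you invoke. Because $d(Z_s^{1-\gamma})=(1-\gamma)\tilde\rho(\kappa_s-1)Z_s^{1-\gamma}\,ds$, the drift of $e^{-\rho_s}Z_s^{1-\gamma}\hat V(y_s)$ is $e^{-\rho_s}Z_s^{1-\gamma}\bigl[\mathcal L^{p,\kappa,b}\hat V-\eta_s\hat V+(1-\gamma)\tilde\rho(\kappa_s-1)\hat V\bigr]$, where $\eta_s=\beta+\delta_s$. The last term does not appear in \eqref{eq:hjb_variational_y}; it is dropped in passing from \eqref{eq:hjb_vi_labor_corrected} to the ratio form. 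Hence neither $J\le\hat V$ nor equality at the candidate controls follows.

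Restoring the term changes the consumption condition to $\partial_\kappa u_1=\hat V'(1+\tilde\rho y)-(1-\gamma)\tilde\rho\hat V$, so the stated $\kappa^*$ would no longer be the maximizer. Applying It\^o to $e^{-\rho_s}\hat V(y_s)$ instead would verify a different problem, with running reward $u_1(\kappa_s,b_s)$ rather than $Z_s^{1-\gamma}u_1$.

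Freezing the age is not ``handled the same way'' either. $\hat V$ solves the HJBVI with $\instdiscount$ fixed, but It\^o produces $-\eta_s\hat V$. For $\gamma>1$ we have $\hat V<0$, so the residual $-(\eta_s-\instdiscount)\hat V$ is positive under increasing Gompertz mortality, which breaks the supermartingale inequality.

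Two further points remain open. The bound \eqref{sufficient_condition_V} is only an upper bound, so it gives no uniform integrability for $\gamma>1$. And the two global inequalities you defer, $\hat V\ge G$ on $(0,y^*)$ and the HJB inequality on $[y^*,\infty)$, are the actual content of a verification theorem and remain unproved.
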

\begin{proof} See Appendix~\ref{Proof_of_Theorem_1:optimal_policies_results}.
\end{proof}

Section~\ref{Numerical_Implementation} provides an economic interpretation of the optimal policies and illustrates the resulting lifecycle regimes numerically.

\section{Numerical Results, Implementation, and Discussion}\label{Numerical_Implementation}
\noindent We focus on the continuation (working) region ($y < y^*$), where the value function $V(y) > G(y)$, and the stopping (annuitization) region ($y \ge y^*$), where $V(y) = G(y)$. The value $y^*$ represents the optimal retirement wealth threshold for the wealth-to-habit ratio. We numerically implement the optimal results derived in Theorem \ref{thm:value_function} (value function)and Theorem \ref{thm:optimal_policies_results} (optimal policies) and discuss their implications for retirement planning under habit formation.

We set the model parameters based on representative values in the literature (e.g., \cite{birungi2025optimal}, \cite{li2025optimal}, \cite{gao2022optimal}, \cite{chen2021optimal}, and \cite{gerrard2012choosing}), considering an agent aged 60 with an age-dependent force of mortality $\delta_t$. Unless otherwise stated, the \textit{market parameters} include a risk-free rate $r = 0.02$, a risky asset mean return $\mu = 0.07$, and volatility $\sigma = 0.2$, implying a market price of risk $\theta = (\mu-r)/\sigma = 0.25$. The agent's \textit{preference parameters} are set as follows: a subjective discount rate $\beta = 0.03$ (though we may vary this for sensitivity analysis, e.g., in the range $(0.01, 0.055)$), a relative risk aversion coefficient $\gamma = 2.0$, a leisure preference parameter $\psi = 0.5$, and a maximum leisure endowment $\bar{l} = 1.0$. \textit{Labor parameters} include a wage rate $w = 10.0$ and a maximum labor supply $\bar{b} = 0.8$. The \textit{annuity consumption rate parameter} $k$ is typically determined endogenously or via Corollary \ref{cor:actuarially_fair_annuity_rate}. Finally, \textit{mortality} is modeled using an age-dependent force $\delta_t$ (e.g., following the Gompertz law in \eqref{eq:Gompertz_mortality}), which combines with the subjective discount rate to form the effective discount rate $\instdiscount = \beta + \delta_t$.

This section explores the optimal pre- and post-retirement strategies under CRRA utility functions defined over the consumption-to-habit ratio $\kappa_t$ and leisure $(\bar{l}-b_t)$ as described in \eqref{eqn:utility_fun_2}. We analyze how habit formation influences the optimal policies; consumption ($\kappa_t^*$), labor ($b_t^*$), portfolio weight ($p_t^*$), and annuitization timing ($\tau^*$).

\begin{figure}[h!]
    \centering
    \includegraphics[width=0.6\textwidth]{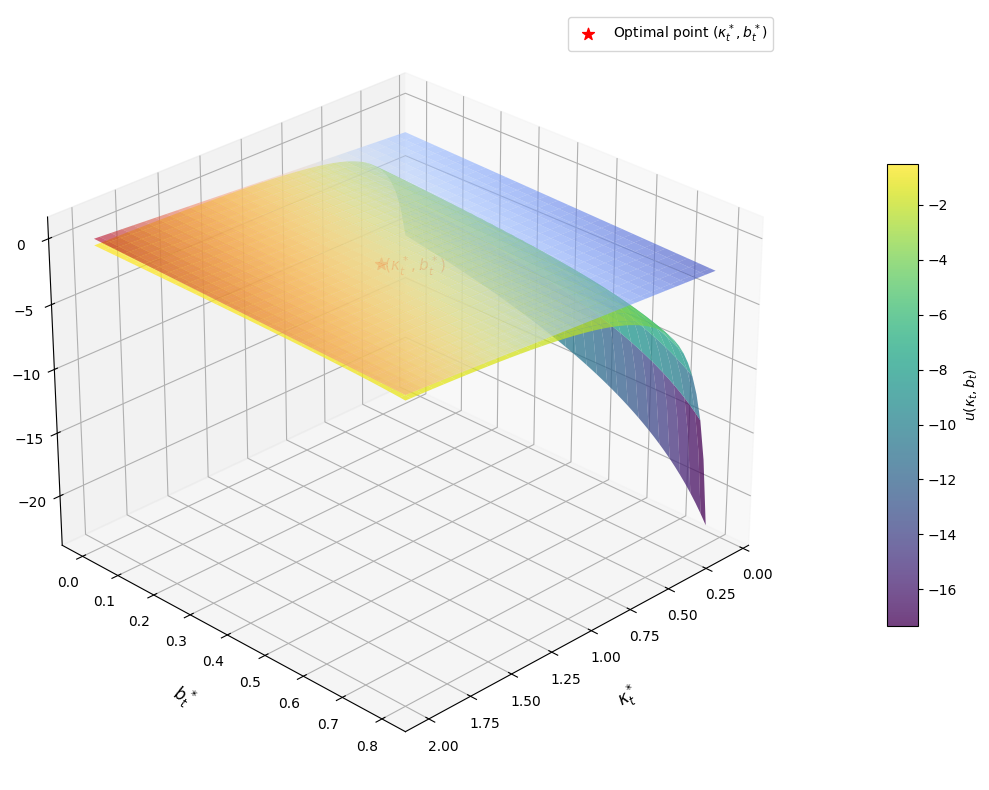}
    \caption{Optimal Policy: Case I (Interior labor solution).}
    \label{fig:case1_interior}
\end{figure}

\begin{figure}[h!]
    \centering
    \includegraphics[width=0.6\textwidth]{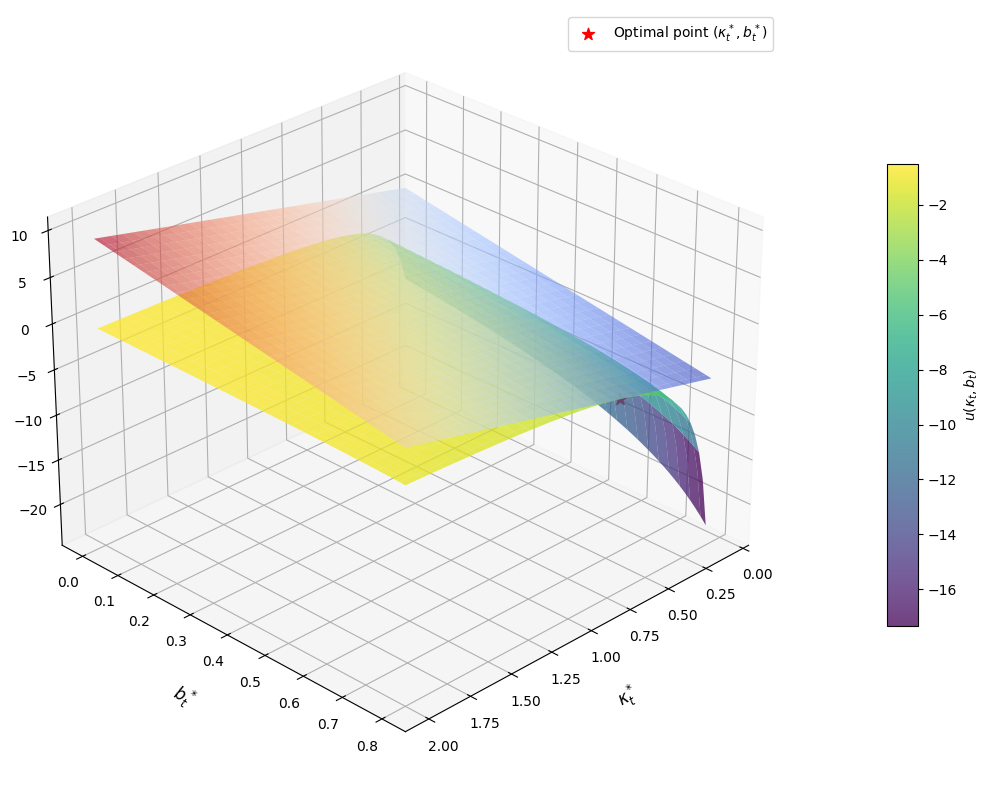}
    \caption{Optimal Policy: Case II (Corner labor solution).}
    \label{fig:case2_corner}
\end{figure}

\noindent \autoref{fig:case1_interior} and \autoref{fig:case2_corner} illustrate the interior and corner labor solutions, respectively, providing a schematic overview of the optimal policies, specifically the consumption-to-habit ratio $\kappa_{t}^*$ and the labor supply $b_{t}^*$, for a given marginal value of wealth $V'(y)$. The tangency point of the interior labor solution occurs within the feasible labor set, indicating that the optimal labor supply $b_{t}^*$ is less than the maximum $\bar{b}$. In the case of the corner labor solution, the agent is constrained by the maximum labor supply $\bar{b}$, which forces the optimal point to the boundary.

\subsection{Policy Characterization}
\noindent
The optimal policies exhibit three economically distinct regimes: a habit-poor region in which labor is supplied defensively to protect consumption standards, a work-to-retire region characterized by maximal labor effort to accelerate annuitization, and a retirement region following irreversible annuitization. In this section, we interpret the behaviour of the optimal policies as functions of the agent's age ($t$) and wealth-to-habit ratio ($y_t$), as visualized in the Figures \ref{fig:kappa_star}, \ref{fig:b_star}, and \ref{fig:p_star}.

The determination of these optimal policies involves plotting the utility surface $u(\kappa_t, b_t)$ and finding the point of tangency with the plane derived from the maximization step of the HJBVI (Hamilton-Jacobi-Bellman Variational Inequality) in equation \eqref{eq:hjb_variational_y}. This optimal tangency point defines $(\kappa_{t}^*, b_{t}^*)$.  In Case I (interior labor solution), this scenario occurs when the wealth-to-habit ratio $y$ lies below the labor-constrained threshold $\tilde{y}$, so that the optimal labor supply is unconstrained and satisfies $b_{t}^{*}<\bar{b}$.  In Case II (corner labor solution) the labor constraint binds for $\tilde{y}\leq y < y^{*}$, in which case the optimal labor supply reaches the boundary $b^{*}_{t} = \bar{b}$

\subsubsection{Optimal consumption to habit ratio ($\kappa_{t}^*$)}

\noindent As illustrated in \autoref{fig:kappa_star}, the optimal consumption-to-habit ratio $\kappa_{t}^*$ exhibits clear dependencies on both state variables. In terms of \textit{wealth effect,} $\kappa_{t}^*$ is strongly increasing with respect to the wealth-to-habit ratio $y_t$. As the agent becomes wealthier relative to their habit ($y_t$ increases), they optimally choose to consume a larger fraction of their habit level, reflecting the standard wealth effect under CRRA preferences. A noticeable change in slope or level occurs at the
 optimal retirement wealth threshold $y^*$, where the policy transitions to the linear rule $\kappa_{t}^* = k y$.

 \begin{figure}[h!]
    \centering
    \includegraphics[width=0.6\textwidth]{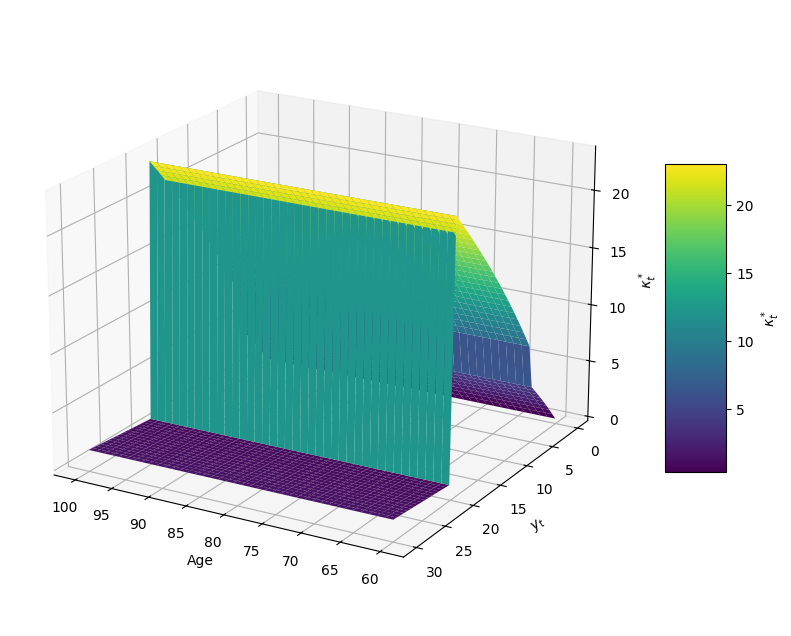}
    \caption{Optimal consumption-to-habit surface as a function of age and the wealth-to-habit ratio.}
    \label{fig:kappa_star}
\end{figure}

\begin{figure}[h!]
    \centering
    \includegraphics[width=0.6\textwidth]{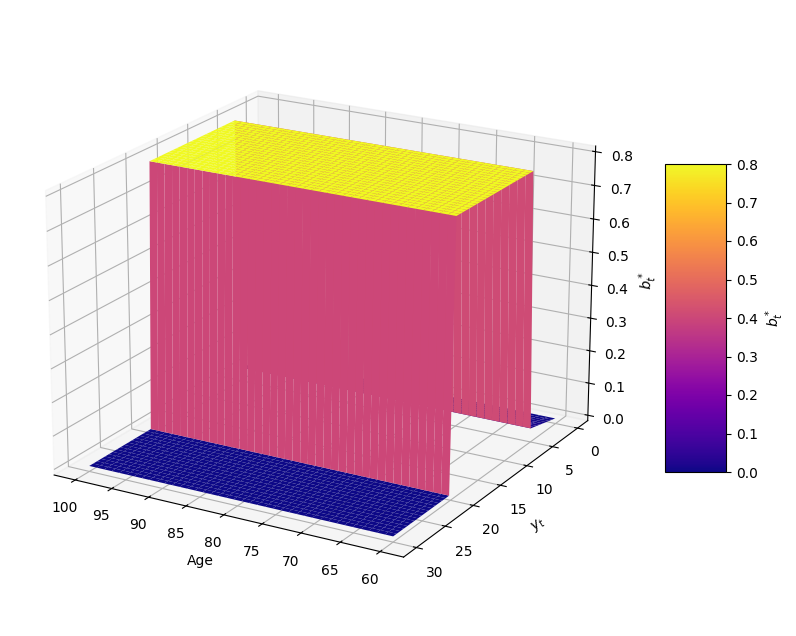}
    \caption{Optimal labor supply surface as a function of age and the wealth-to-habit ratio.}
    \label{fig:b_star}
\end{figure}

\begin{figure}[h!]
    \centering
    \includegraphics[width=0.6\textwidth]{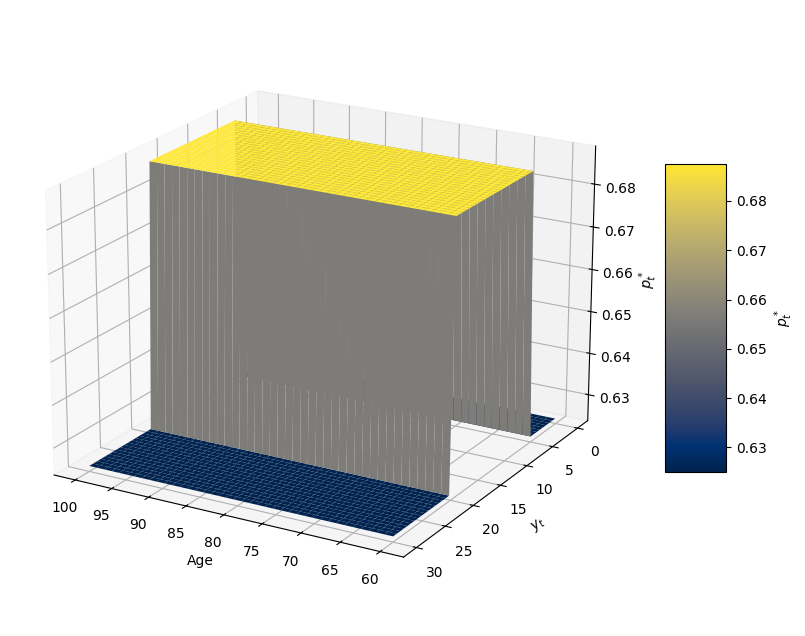}
    \caption{Optimal portfolio weight surface as a function of age and the wealth-to-habit ratio.}
    \label{fig:p_star}
\end{figure}

 In terms of \textit{age effect,} the influence of age is primarily channelled through the effective discount rate $\instdiscount = \beta + \delta_t$. If mortality $\delta_t$ increases significantly with age, $\instdiscount$ rises, making the agent more impatient. This may lead to a slight increase in $\kappa_{t}^*$ with age for a given $y_{t}$, as the agent prioritizes current consumption due to a shorter expected horizon. However, this effect is generally less pronounced than the wealth effect.

\subsubsection{Optimal labor supply ($b_{t}^*$)}

\noindent \autoref{fig:b_star} illustrates the distinct regimes governing the optimal labor supply $b_{t}^*$. In terms of \textit{wealth effect,}  labor supply demonstrates a non-monotonic relationship with the wealth-to-habit ratio $y_{t}$. For low $y$($y < \tilde{y}$), $b_{t}^*$ is positive and determined by the interior solution, potentially increasing with $y_{t}$ as the agent works to accumulate wealth. At the subsistence consumption wealth threshold $\tilde{y}$, $b_{t}^*$ jumps to its upper bound $\bar{b}$, indicating maximal work effort for intermediate levels of $y_{t}$. Finally, upon reaching the optimal retirement wealth threshold $y^*$, $b_{t}^*$ drops discontinuously to zero as the agent retires.
 
Age influences the decision to supply labor. As the agent ages and the force of mortality $\delta_t$ increases, the present value of future wage income decreases. This may reduce the incentive to work, potentially causing a slight decrease in the interior labor supply (for $y < \tilde{y}$) and possibly lowering the thresholds $\tilde{y}$ and $y^*$ over time, making retirement more attractive earlier.

\subsubsection{Optimal portfolio weight ($p_{t}^*$)}

\noindent The optimal fraction of wealth invested in the risky asset, $p_{t}^*$, is shown in \autoref{fig:p_star}. Under the assumption of CRRA utility, the optimal portfolio weight $p_{t}^*$ is largely independent of the wealth-to-habit ratio $y_{t}$ within the continuation region ($y < y^*$). It is approximately equal to the Merton ratio $\mu-r/\sigma^2 \gamma_{\text{eff}}$, (see \cite{koo2013optimal}) where $\gamma_{\text{eff}}$ is the effective risk aversion derived from the value function ($-V'(y)/yV''(y)$). While $\gamma_{\text{eff}}$ might vary slightly between the $V_{\mathrm{int}}$ and $V_{\bar{b}}$ regions, causing small shifts at $\tilde{y}$, the allocation remains relatively stable. In the full retirement region ($y \ge y^*$), the policy becomes constant, $p_{t}^* = \mu-r/\sigma^2 \gamma$, consistent with the standard Merton problem  (see \cite{koo2013optimal} and Remark 3.6.8 in \cite{bertsekas1996stochastic})

 In this model setup, where the primary age-dependent factor is mortality influencing impatience rather than risk aversion or investment opportunities, the optimal portfolio weight $p_{t}^*$ is expected to be largely insensitive to age. The constant relative risk aversion implies a constant allocation fraction throughout the lifecycle.

\subsection{Habit formation influences on optimal policies}
\noindent To understand the impact of habit formation, the labor-leisure choice, and the annuitization option, \autoref{fig:kappa_star_2d}, \autoref{fig:b_star_2d}, and \autoref{fig:p_star_2d} compares the optimal policies derived from our model against a standard CRRA benchmark. The benchmark represents the simple post-annuitization policy (the case where $y \ge y^*$). The comparison is shown across different levels of risk aversion ($\gamma$) as a function of the wealth-to-habit ratio ($y_t$). \autoref{fig:kappa_star_2d} reveal how the pre-annuitization phase or working period with its active labor-leisure trade-off and consumption-smoothing goals cause optimal policies to deviate significantly from this simple benchmark.

\begin{figure}[h!]
    \centering
    \includegraphics[width=0.7\textwidth]{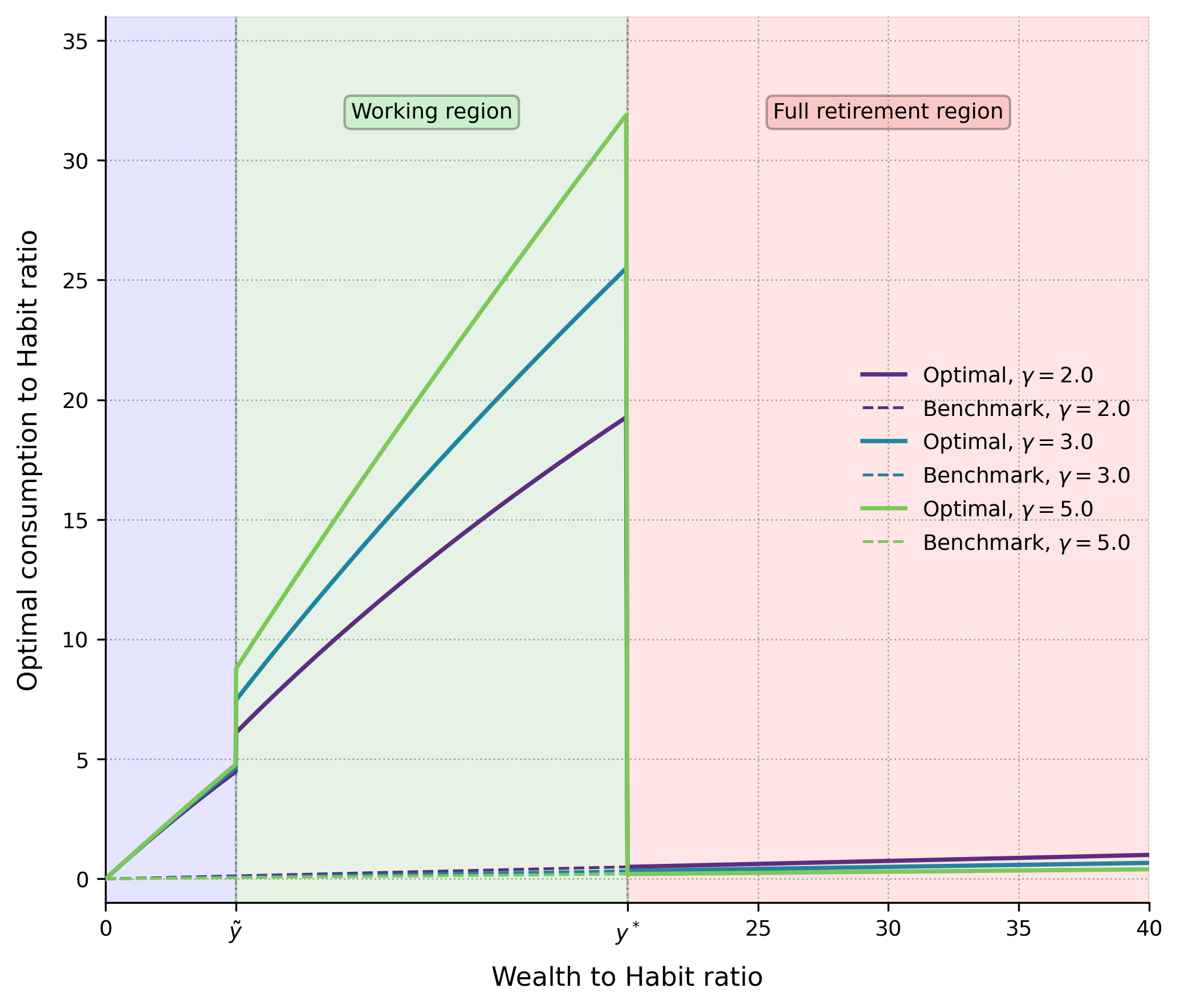}
    \caption{Optimal consumption-to-habit ratio policy under deterministic force of mortality.}
    \label{fig:kappa_star_2d}
\end{figure}

\begin{figure}[h!]
    \centering
    \includegraphics[width=0.7\textwidth]{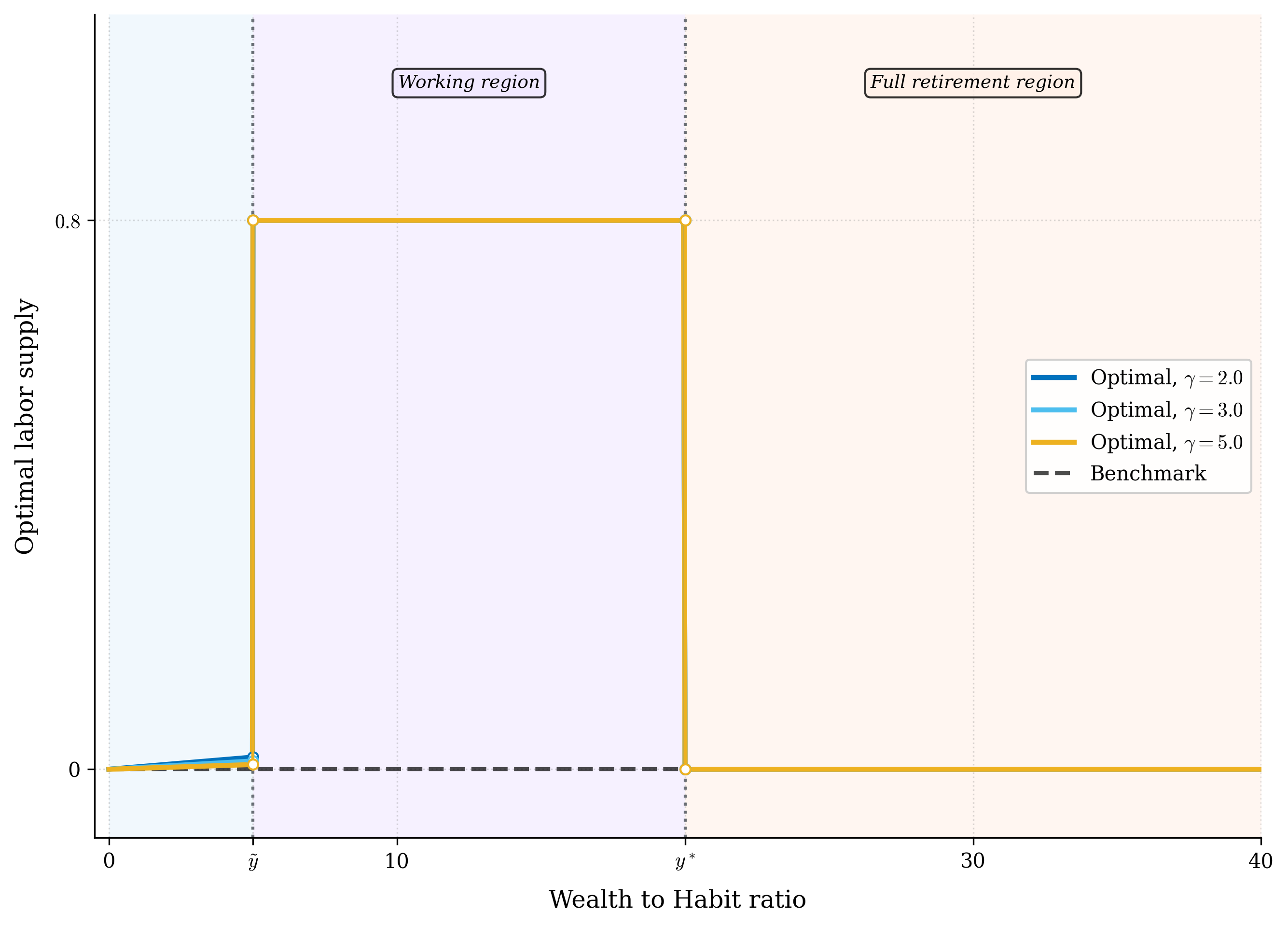}
    \caption{Optimal labor supply policy under a deterministic force of mortality.}
    \label{fig:b_star_2d}
\end{figure}

\begin{figure}[h!]
    \centering
    \includegraphics[width=0.7\textwidth]{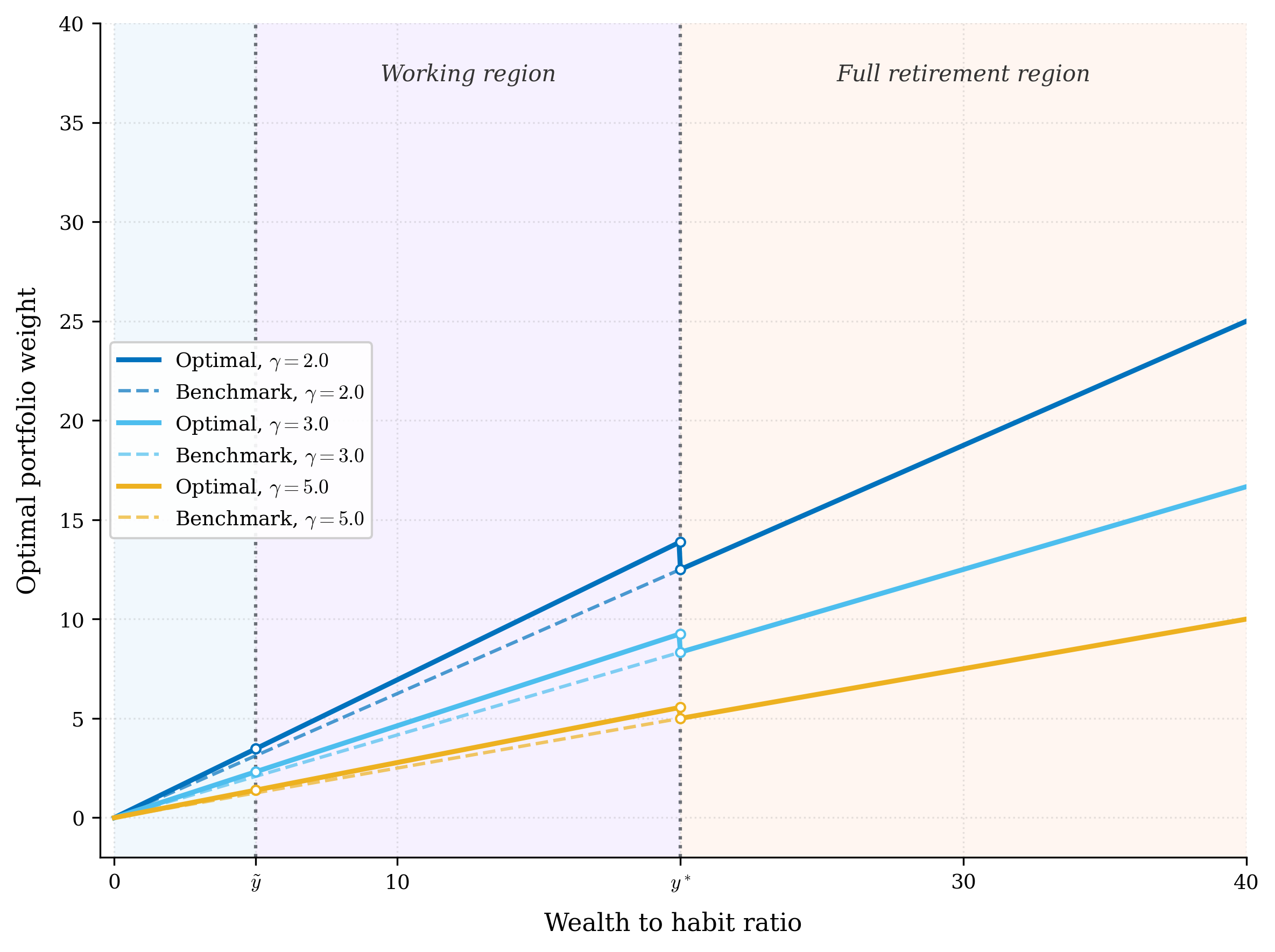}
    \caption{Optimal risky investment strategy under a deterministic force of mortality.}
    \label{fig:p_star_2d}
\end{figure}

\subsubsection{Habit formation influences on  optimal consumption to habit ratio}

\noindent \autoref{fig:kappa_star_2d} shows the direct impact of \textit{consumption habits} and the \textit{labor-leisure trade-off}.

In the \textit{defensive consumption ($y < y^*$),} optimal consumption to habit ratio policies derived from our model differ significantly from the simple benchmark ($\kappa_{t}^* = ky$). At low $y$, the agent is ``habit-poor'' and fights to prevent $\kappa_{t}^*$ from falling too low. They achieve this \textit{defensive consumption} by supplementing wealth with wage income (from $b_{t}^* > 0$). This is the core interplay of \textit{labor being sacrificed to protect the habit.}

Another property of our policies is \textit{discontinuities}: optimal strategies are functions of the state variable $y_t = X_t/Z_t$ as seen in Theorem \ref{thm:optimal_policies_results}. The policy exhibits two sharp changes. The first is a jump at $y = \tilde{y}$, where the labor supply policy switches from interior to the corner solution $\bar{b}$. The second is a kink at $y = y^*$, where the policy smoothly connects to the linear annuitization rule, demonstrating the smooth-pasting condition. And for \textit{risk aversion ($\gamma$)}, higher $\gamma$ makes the agent become more careful and conservative in managing risk or uncertainty. They reduce consumption (i.e., exhibit a lower $\kappa_{t}^*$) across all levels of $y$, thereby accumulating a larger buffer of precautionary savings against future uncertainty.

\subsubsection{Habit formation influences on optimal labor supply}
\noindent \autoref{fig:b_star_2d} illustrates the labor-leisure trade-off as the primary engine of the pre-annuitization phase or working period. The benchmark is $b_{t}^* = 0$ representing the post-annuitization state or full retirement period. In our model, the agent actively uses labor to manage their financial position relative to their habit.

 When \textit{``habit-poor''}, the agent's wealth is low relative to their accustomed lifestyle. To protect their consumption habit, they sacrifice leisure and supply interior labor ($b_{t}^* > 0$). This wage income provides a crucial buffer to supplement wealth.

In the \textit{ ``work-to-retire'' zone,} the agent enters a phase of maximal work ($b_{t}^* = \bar{b}$). They are wealthy enough that the annuitization goal is in sight, but not so wealthy that they can stop. They work as hard as possible to bridge the gap and reach the full retirement threshold $y^*$.

In the \textit{full retirement period}, the agent exercises their \textit{option to annuitize or retire}. They have sufficient wealth to lock in a lifetime consumption stream. Labor immediately drops to zero, and they enter full retirement.

As for \textit{risk aversion ($\gamma$),} higher $\gamma$ (more cautious) agents supply more labor in the interior region ($y < \tilde{y}$). They prefer a certain income from labor over the uncertain returns from financial investment which they hold less of, ( see Figure~\ref{fig:b_star_2d}). They work more to self-insure.

\subsubsection{Habit formation influences on optimal risky investment}
\noindent \autoref{fig:p_star_2d} shows the optimal scaled risky investment, $\hat{\pi}_t^*(y) = \pi_t^*/Z_t$, showing the interplay between financial risk taking and labor income. The slope of the line in this figure represents the optimal portfolio \textit{weight} $p_t^*$. The benchmark policy is given by the line $\hat{\pi}_{bench}^*(y) = p_{bench}^* \cdot y$, where $p_{bench}^* = (\mu-r)/(\sigma^2\gamma)$ is the constant Merton weight that represents the investment policy of a "pure" investor with no labor income. During the working period, the agent's ability to work acts as a ``human capital'' buffer. This labor income allows the agent to take more risk in their financial portfolio. This is why the 'optimal' line ($\gamma=2.0$) in \autoref{fig:p_star_2d} has a steeper slope than its corresponding 'Benchmark' line ($\gamma=2.0$) and the other 'optimal' lines. 
 
There is a \textit{discontinuity at full retirement ($y = y^*$),} at this optimal retirement wealth threshold, the agent retires, and their labor income drops to zero. Their human capital vanishes instantly. To compensate for this loss of a "safe" asset, they immediately become more conservative in their financial portfolio. Their optimal portfolio \textit{weight} $p_t^*$ drops discontinuously to the lower Merton benchmark $p_{bench}^*$. This creates the distinct \textit{downward kink} (a drop in the slope) in the scaled investment plot at $y^*$. As for risk aversion ($\gamma$), a higher $\gamma$ (more risk-averse) means the agent is less willing to bear market risk. This lowers their optimal portfolio weight $p_{t}^*$ in all regions, resulting in a flatter slope for both the 'optimal' and 'benchmark' lines.

\subsection{Mortality risk and survival probability}

\noindent \autoref{fig:survival_age} illustrates the relationship between mortality risk and survival probability in the agent's decision-making process from age 60. It shows that an agent's subjective mortality risk (represented by the modal age of death, $m$) has a significant impact on their optimal policies. The key mechanism is the agent's \textit{effective discount rate}, $\instdiscount = \beta + \delta_t$, which is directly influenced by the force of mortality $\delta_t$.

\begin{figure}[h!]
    \centering
    \includegraphics[width=0.8\textwidth]{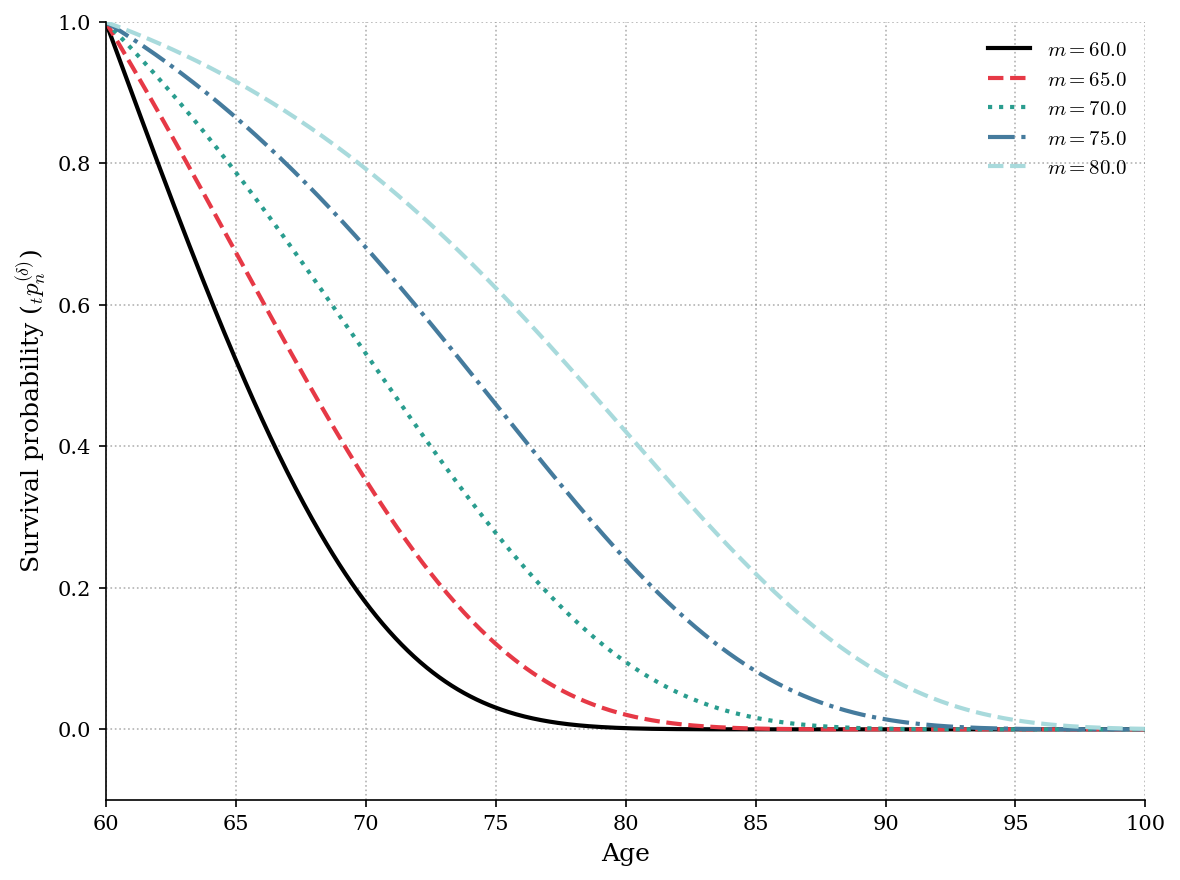} %
    \caption{Agent survival probabilities from age 60 \(\left( \px[t]{60}[(\delta)] \right) \) under the Gompertz law, characterized by the modal age of death ($m$) and the dispersion parameter ($\lambda$). Results are shown for $\lambda=10$ and varying subjective modal ages ($m$).}
    \label{fig:survival_age}
\end{figure}

As seen in \autoref{fig:survival_age},  the \textit{high mortality risk case (e.g., when $m=60$}), the survival probability, $\px[t]{n}[(\delta)]$ drops very quickly. The agent's effective time horizon is very short. This affects the agent's patience, that is because $n=m$, the force of mortality $\delta_t$ is high from the beginning. This causes the effective discount rate $\instdiscount$ to be very \textit{high}, making the agent impatient. For retirement and annuitization, the value of the lifelong income stream option declines substantially. If lifelong is expected to be very short (as the black line($m=60$) shows in \autoref{fig:survival_age}), there is minimal incentive to save for a future that is unlikely to occur. This implies the optimal retirement wealth threshold $y^*$ will be \textit{very low}. In terms of interplay of mortality risk with labor and consumption, because the future is so uncertain (per the $m=60$ curve) and the annuitization prize has low value, the agent will \textit{work less} (lower $b_{t}^*$) and \textit{consume more} (higher $\kappa_{t}^*$) \textit{today}. The labor-leisure trade-off shifts heavily towards current consumption and leisure.

There is low mortality risk (e.g., the $m=80$). This shows that the survival probability, $\px[t]{n}[(\delta)]$, declines very slowly. The agent perceives a long and probable future, as they are 20 years away from their modal age of death. Again, this affects the agent's patience; that is, the force of mortality, $\delta_t$, is low for many years. This keeps the effective discount rate $\instdiscount$ low, making the agent "patient." Annuitization becomes highly valuable since it represents a guaranteed, high-value income stream for decades. This creates a powerful incentive to work and save to achieve this goal. Regarding the interplay with labor and consumption, the agent is highly motivated to reach the optimal retirement wealth threshold $y^*$. This means the agent will \textit{work more} (higher $b_{t}^*$) and \textit{save more by consuming less} (lower $\kappa_{t}^*$) during their working years. The labor-leisure trade-off shifts towards labor to secure the valuable retirement prize.

In summary, the mortality risk and survival probability curve dictates the agent's motivation. A steep curve (like $m=60$) breaks the incentive to save, leading to less work, higher current consumption, and a lower annuitization target. A flat curve (like $m=80$) strengthens the incentive to save, leading to more work, more saving, and a higher, more valuable annuitization goal.

\subsubsection{Mortality risk  after retirement}

\noindent \autoref{fig:survival_age} and Table \ref{tab:subjective_premium} illustrate the effect of subjective mortality beliefs on the agent's annuitization decision. The values in Table \ref{tab:subjective_premium} are calculated as a \textit{Normalized Premium Ratio (NPR)}. This ratio compares the agent's perceived value of the annuity (Subjective Premium $\tilde{P}$) to the insurer's market price (Objective Premium $P_{\text{obj}}$).

\begin{table}[htbp]
\centering
\caption{Normalized subjective annuity premium $\tilde{P}_0^A$ for a 60-year-old agent, 
given an objective premium $P_0^A = 1$ based on an insurer’s modal age $m = 80.0$.}
\label{tab:subjective_premium}
\vspace{0.3em}
\begin{tabular}{lcc}
\toprule
Parameters & Subjective modal age $\tilde{m}$ & Normalized annuity premium $\tilde{P}_0^A$ \\
\midrule
Case 1 & 60.0 & 0.3642 \\
Case 2 & 65.0 & 0.4913 \\
Case 3 & 70.0 & 0.6385 \\
Case 4 & 75.0 & 0.8066 \\
Case 5 & 80.0 & 1.0000 \\
\bottomrule
\end{tabular}

\vspace{0.5em}
\footnotesize{
*The second column represents the subjective modal age $\tilde{m}$ for which the normalized annuity premium $\tilde{P}_0^A$ is computed.  
**Basic parameters:** $n = 60$, $\lambda = 10$, $r = 0.02$, and $P_0^A = 1$. 
}
\end{table}

\noindent The premium $P$ for a \$1 continuous annuity is its actuarial present value, defined by the agent's current age $n$, modal age $m$, dispersion $\lambda$, and the risk-free rate $r$
\begin{equation}
    P(m) = \int_0^\infty e^{-rt} \cdot \px[t]{n}[(\delta)](m)dt
\end{equation}
where $\px[t]{n}[(\delta)](m)$ is the Gompertz survival probability from equation \eqref{eq:age_dependent_mortality}. The NPR is the ratio of the premium calculated using the agent's subjective modal age, $\tilde{m}$, to the premium calculated using the insurer's objective modal age, $m_{\text{obj}}$
\begin{equation}
    \text{NPR}(\tilde{m}) = \frac{\tilde{P}}{P_{\text{obj}}} = \frac{P(\tilde{m})}{P(m_{\text{obj}})} = \frac{\int_0^\infty e^{-rt} \cdot \px[t]{n}[(\delta)](m) \, dt}{\int_0^\infty e^{-rt} \cdot 
    \px[t]{n}[(\delta)](m_{\text{obj}}) dt}
\end{equation}

\noindent  We classify agents as pessimistic retirees, that is, \textit{pessimistic retirees (those with $\tilde{m} < 80.0$)}, a group of individuals who believe they will die sooner than the "objective" insurer's estimate of $m=80$. Extreme pessimism ($\tilde{m} = 60.0$), as shown in \autoref{fig:survival_age} with $\tilde{m} = 60.0$), this agent is already at their modal age of death. Their survival probability drops sharply.
Table \ref{tab:subjective_premium} shows their subjective premium is only \textit{36.4\%} of the market price. For them, the annuity is perceived as an extremely bad deal. Moderate pessimism ($\tilde{m} = 75.0$) refers to agents who expect to die 5 years earlier than the benchmark, still only valuing the annuity at 80.7\% of its price.

The economic implication of all pessimistic agents, $\text{NPR}< 1$, is that they perceive the value of annuitization as low.  This disincentivizes annuitization, which in our model translates to a higher annuitization threshold  $y^{*}$ , or, in extreme cases, to effectively avoiding annuitization altogether.

On the other hand, the \textit{neutral retiree ($\tilde{m} = 80.0$),} is an agent's beliefs (\textit{illustrated by $\tilde{m} = 80.0$} in \autoref{fig:survival_age}) align perfectly with the insurer's. They perceive the annuity as fairly priced ($\text{NPR} = 1.0000$). They will annuitize at the "baseline" threshold $y^*$ calculated by the model, balancing current work and consumption against the fair value of a future income stream.

\section{Conclusion}\label{Conclusion_Recommendations}

\noindent
This paper studied a coupled stochastic optimal control and optimal stopping problem in which an agent jointly manages consumption relative to habit, flexible labor supply, portfolio choice, and an irreversible annuitization decision under age-dependent mortality. By formulating the problem as a Hamilton-Jacobi-Bellman variational inequality and exploiting homotheticity, we derived tractable semi-analytical solutions characterizing optimal policies across the lifecycle.

The model generates a rich sequence of retirement dynamics. When wealth is low relative to habit, labor is used defensively to protect consumption standards. As wealth accumulates, agents enter a work-to-retire phase in which labor is supplied at its maximum level to accelerate access to retirement and annuitization. Human capital acts as a stabilizing asset during working life, justifying more aggressive portfolio risk-taking, followed by an abrupt de-risking at annuitization when this asset vanishes.

Subjective mortality beliefs play a central role in shaping annuitization behavior. Agents with pessimistic longevity beliefs rationally perceive annuities as unattractive, leading them to delay or avoid annuitization since it requires higher retirement thresholds. In this way, the model provides a behavior-based explanation for low annuity demand without appealing to market frictions or irrationality.

Overall, the numerical analysis highlights how habit formation and endogenous labor supply fundamentally reshape retirement dynamics, portfolio risk-taking, and annuitization decisions. From a practical perspective, the results suggest that pre-retirement financial advice should account explicitly for labor flexibility and subjective longevity beliefs, supporting aggressive pre-retirement investment strategies followed by sharp de-risking at retirement.

Several extensions merit future investigation. Incorporating stochastic labor income or stochastic market volatility would introduce additional hedging motives. Allowing for partial annuitization rather than an all-or-nothing decision would add further realism. Modeling health status as a stochastic factor jointly affecting mortality and leisure preferences would also be a valuable extension, but is beyond the scope of this paper.

As populations age and defined-contribution retirement systems become increasingly prevalent, models that bridge psychological realism and financial optimization are essential for understanding retirement behavior and informing long-term financial planning.

\bibliographystyle{plainnat}
\bibliography{reference.bib}

\appendix
\section{Appendix: Detailed Proofs of Theorems}
\noindent This appendix provides detailed derivations for the main results of the paper.

\subsection{Proof of Proposition \ref{thm:Dynamic_programming_principle_habit}: Dynamic Programming Principle (DPP)}\label{app:DPP_Proof}

\noindent The proof of the Dynamic Programming Principle follows standard arguments for combined control and stopping problems and is included for completeness.  

\begin{proof}
Let the objective functional for an admissible strategy $(\pi, c, b)$ and a stopping time $\tau$ be denoted by $J(t, x, z; \pi, c, b, \tau)$. The value function is defined as the supremum over the set of all admissible strategies $\mathcal{A}$, such that
$$
V(t, x, z) = \sup_{(\pi, c, b, \tau) \in \mathcal{A}} J(t, x, z; \pi, c, b, \tau).
$$
To prove the equality in \eqref{eq:dpp}, we will establish two inequalities. Let the right-hand side of \eqref{eq:dpp} be denoted by $\text{RHS}$.

\vspace{1em}
\noindent\textit{Part 1: Show that 
\begin{equation} \label{eq:dpp_greater}
\begin{split}
    V(t, x, z) \ge \max \Biggl\{ G(x), \sup_{(\pi, c, b)} \mathbb{E}_{t,x,z} \biggl[ &\int_t^{t+\Delta t} e^{-\int_t^s (\beta + \delta_u) du} U(c_s,b_s) ds \\
    &+ e^{-\int_t^{t+\Delta t} (\beta + \delta_u) du} V(t+\Delta t, X_{t+\Delta t}, Z_{t+\Delta t}) \biggr] \Biggr\}.
\end{split}
\end{equation}
}
\noindent We write equation \eqref{eq:dpp_greater} as $V(t, x, z) \ge \text{RHS}$. First, we show that $V(t, x, z)$ is greater than or equal to each term inside the $\max(\cdot, \cdot)$ operator.
\begin{enumerate}
    \item The strategy of stopping immediately at time $\tau=t$ is admissible. This yields a value of $G(x)$. Since the value function is the supremum over all admissible strategies, it must be that $V(t, x, z) \ge G(x)$.

    \item Next, consider an admissible strategy that applies an arbitrary control policy $(\pi, c, b)$ on the interval $[t, t+\Delta t]$ and then proceeds optimally from time $t+\Delta t$ onwards. The value of this combined strategy is
    \begin{equation}\label{eq:combined_strat_val}
    \begin{split}
        \mathbb{E}_{t,x,z}\biggl[ &\int_t^{t+\Delta t} e^{-\int_t^s (\beta + \delta_u) du} U(c_s,b_s) ds \\
        &\quad + e^{-\int_t^{t+\Delta t} (\beta + \delta_u) du} V(t+\Delta t, X_{t+\Delta t}, Z_{t+\Delta t}) \biggr].
    \end{split}
    \end{equation}
\end{enumerate}
Because the value function $V(t, x, z)$ is the supremum over all admissible strategies, its value must be greater than or equal to the value of this specific strategy. This holds for any choice of $(\pi, c, b)$ on $[t, t+\Delta t]$, and thus it must also hold for the supremum over all such choices. Therefore,
    \begin{align*}
 &V(t, x, z) \ge \sup_{(\pi, c, b)} \mathbb{E}_{t,x,z} \biggl[ \int_t^{t+\Delta t} e^{-\int_t^s (\beta + \delta_u) du} U(c_s,b_s) ds %
          + e^{-\int_t^{t+\Delta t} (\beta + \delta_u) du} V(t+\Delta t, X_{t+\Delta t}, Z_{t+\Delta t}) \biggr].
    \end{align*}
Since $V(t, x, z)$ is greater than or equal to both arguments of the $\max(\cdot, \cdot)$ operator, it must be greater than or equal to their maximum. Thus, we have established that $V(t, x, z) \ge \text{RHS}$.

\vspace{1em}
\noindent\textit{Part 2: Show that $V(t, x, z) \le \text{RHS}$.} For any given $\varepsilon > 0$, the definition of the value function guarantees the existence of an $\varepsilon$-optimal strategy $(\pi^\varepsilon, c^\varepsilon, b^\varepsilon, \tau^\varepsilon)$ such that
\begin{align}\label{eq:eps_optimal}
    V(t, x, z) \le J(t, x, z; \pi^\varepsilon, c^\varepsilon, b^\varepsilon, \tau^\varepsilon) + \varepsilon.
\end{align}
We consider two possibilities for the stopping time $\tau^{\varepsilon}$:
\begin{itemize}
    \item[1.] If $\tau^{\varepsilon} = t$, the strategy is to stop immediately, so
    \[
        J(t, x, z; \pi^{\varepsilon}, c^{\varepsilon}, b^{\varepsilon}, t) = G(x).
    \]
    Substituting this into \eqref{eq:eps_optimal} gives
    $V(t, x, z) \le G(x) + \varepsilon$.
    By definition, $G(x) \le \text{RHS}$, which implies
    $V(t, x, z) \le \text{RHS} + \varepsilon$.
    
    \item[2.] If $\tau^{\varepsilon} > t$, we decompose the objective function.  By the law of iterated expectations and by the definition of the value function, we have
\end{itemize}
{\allowdisplaybreaks
\begin{align}
    J&(t, x, z; \pi^{\varepsilon}, c^{\varepsilon}, b^{\varepsilon}, \tau^{\varepsilon})
    = \mathbb{E}_{t,x,z}\Biggl[
        \int_t^{\tau^{\varepsilon}} 
            e^{-\int_t^s (\beta + \delta_u)\,du} 
            U(c_s^{\varepsilon}, b_s^{\varepsilon})\,ds
        + e^{-\int_t^{\tau^{\varepsilon}} (\beta + \delta_u)\,du} 
            G(X_{\tau^{\varepsilon}})
    \Biggr] \nonumber \\[0.5em]
    &= \mathbb{E}_{t,x,z}\Biggl[
        \int_t^{t+\Delta t} 
            e^{-\int_t^s (\beta + \delta_u)\,du} 
            U(c_s^{\varepsilon}, b_s^{\varepsilon})\,ds %
        + e^{-\int_t^{t+\Delta t} (\beta + \delta_u)\,du} 
            J(t+\Delta t, X_{t+\Delta t}, Z_{t+\Delta t};
            \pi^{\varepsilon}, c^{\varepsilon}, b^{\varepsilon}, \tau^{\varepsilon})
    \Biggr] \nonumber \\[0.5em]
    &\le \mathbb{E}_{t,x,z}\Biggl[
        \int_t^{t+\Delta t} 
            e^{-\int_t^s (\beta + \delta_u)\,du} 
            U(c_s^{\varepsilon}, b_s^{\varepsilon})\,ds
        + e^{-\int_t^{t+\Delta t} (\beta + \delta_u)\,du}  V(t+\Delta t, X_{t+\Delta t}, Z_{t+\Delta t})
    \Biggr]. \label{eq:J_less_than_E}
\end{align}
} %
\noindent The right-hand side of inequality~\eqref{eq:J_less_than_E} is the expected value from following the specific policy
$(\pi^{\varepsilon}, c^{\varepsilon}, b^{\varepsilon})$ over the interval $[t, t+\Delta t]$.
This value cannot exceed the supremum taken over all admissible policies on that interval. Therefore,
    \begin{align}\label{eq:E_less_than_sup}
    \begin{split}
        &\mathbb{E}_{t,x,z}\Biggl[ \int_t^{t+\Delta t} e^{-\int_t^s (\beta+\delta_u) du} U(c_s^\varepsilon,b_s^\varepsilon) ds + e^{-\int_t^{t+\Delta t} (\beta+\delta_u) du} V(t+\Delta t, X_{t+\Delta t}, Z_{t+\Delta t}) \Biggr] \\
        &\quad\le \sup_{(\pi, c, b)} \mathbb{E}_{t,x,z} \biggl[ \int_t^{t+\Delta t} e^{-\int_t^s (\beta + \delta_u) du} U(c_s,b_s) ds + e^{-\int_t^{t+\Delta t} (\beta + \delta_u) du} V(t+\Delta t, X_{t+\Delta t}, Z_{t+\Delta t}) \biggr] \\
        &\quad\le \text{RHS}.
    \end{split}
    \end{align}
   
    \noindent Combining \eqref{eq:eps_optimal}, \eqref{eq:J_less_than_E}, and \eqref{eq:E_less_than_sup}, we find that
    $$
    V(t, x, z) \le J(t, x, z; \pi^\varepsilon, c^\varepsilon, b^\varepsilon, \tau^\varepsilon) + \varepsilon \le \text{RHS} + \varepsilon.
    $$

    \noindent In both cases for $\tau^\varepsilon$, we conclude that $V(t, x, z) \le \text{RHS} + \varepsilon$. Since this holds for any arbitrarily small $\varepsilon > 0$, we must have $V(t, x, z) \le \text{RHS}$. Combining the results from Part 1 and Part 2, we have proven the equality in \eqref{eq:dpp}.
\end{proof}

\subsection{Proof of Theorem \ref{thm:value_function} (Value Function)}\label{Proof_of_Theorem_1:_Value_Function}

\noindent We analyze the pre-annuitization problem where the agent has not yet reached the optimal stopping boundary, i.e., $y < y^*$. In this continuation region, the value function $V(y)$ is greater than the annuitization value $G(y)$ and must satisfy the Hamilton-Jacobi-Bellman (HJB) equation. We analyze the problem by partitioning the continuation region based on the optimal labor supply $b^*$.\\

\begin{proof}
\noindent The stationary value function $V(y)$ satisfies the HJB equation
\begin{equation}\label{eq:HJB_pre_proof_detail}
\instdiscount V(y) = \sup_{p, \kappa, b} \left[ u_1(\kappa, b) + \mathcal{L}V(y) \right],
\end{equation}
where $u_1(\kappa,b) = \frac{(\kappa(\bar{l} - b)^{\psi})^{1-\gamma}}{1-\gamma}$ and the generator is
\[
\mathcal{L}V(y) = V'(y) \left[ (r+\rho)y + p y(\mu-r) - \kappa(1+\rho y) + w b \right] + \frac{1}{2} V''(y) \sigma^2 p^2 y^2.
\]

\vspace{1em}
\noindent
\textbf{Case I: $y < \tilde{y}$ (Interior Labor Supply $0 \le b^* < \bar{b}$).}
\vspace{0.5em}
In this region, all controls are interior. We substitute the optimal portfolio $p^*(y) = -\frac{\mu-r}{\sigma^2} \frac{V'(y)}{y V''(y)}$ into the HJB equation \eqref{eq:HJB_pre_proof_detail} to get the maximized HJB for $V(y) = V_{\mathrm{int}}(y)$
\begin{equation}\label{eq:HJB_maximized_int}
\instdiscount V_{\mathrm{int}}(y) = u_1(\kappa^*, b^*) + V'_{\mathrm{int}}(y) \left[ (r+\rho)y - \kappa^*(1+\rho y) + w b^* \right] - \frac{1}{2}\left(\frac{\mu-r}{\sigma}\right)^2 \frac{(V'_{\mathrm{int}}(y))^2}{V''_{\mathrm{int}}(y)}.
\end{equation}
To solve this non-linear PDE, we use a dual method. We assume the optimal policy $\kappa^*=K_{\mathrm{int}}(y)$ is invertible, with inverse $y=Y_{\mathrm{int}}(\kappa)$. The FOCs from Theorem \ref{thm:optimal_policies_foc} relate the value function's derivatives to the optimal controls
\begin{align}
V'_{\mathrm{int}}(y) &= \frac{1}{1+\rho y} \frac{\partial u_1(\kappa, b^*)}{\partial \kappa}, \label{eq:V_prime_int_y_kappa} \\
V''_{\mathrm{int}}(y) &= \frac{d}{dy} \left( V'_{\mathrm{int}}(y) \right). \label{eq:V_double_prime_int_y_kappa}
\end{align}
Substituting these back into the maximized HJB \eqref{eq:HJB_maximized_int} and differentiating the entire equation with respect to $\kappa$ transforms the problem into a second-order linear ordinary differential equation (ODE) for the inverse function $Y_{\mathrm{int}}(\kappa)$
\begin{equation}\label{eq:second_order_ODE_detail_y}
\mathcal{A}_{\mathrm{int}} \kappa^2 Y''_{\mathrm{int}}(\kappa) + \mathcal{B}_{\mathrm{int}}(\kappa, Y) \kappa Y'_{\mathrm{int}}(\kappa) + \mathcal{C}_{\mathrm{int}}(\kappa, Y) Y_{\mathrm{int}}(\kappa) = \mathcal{D}_{\mathrm{int}}(\kappa, Y),
\end{equation}
where the coefficients are functions of model parameters. The general solution is of the form $Y_{\mathrm{int}}(\kappa) = A_1 \kappa^{m_1} + A_2 \kappa^{m_2} + Y_{p}(\kappa)$. Economic boundary conditions typically ensure $A_1=0$, leaving a solution dependent on one constant, $A_2$.

\vspace{1em}
\noindent
\textbf{Case II: $\tilde{y} \le y < y^*$ (Corner Labor Supply $b=\bar{b}$)}
\vspace{0.5em}
Here, $b^*=\bar{b}$. The maximized HJB equation for $V(y) = V_{\bar{b}}(y)$ is:
\begin{equation}\label{eq:HJB_maximized_bar_b}
\instdiscount V_{\bar{b}}(y) = u_1(\kappa^*, \bar{b}) + V'_{\bar{b}}(y) \left[ (r+\rho)y - \kappa^*(1+\rho y) + w \bar{b} \right] - \frac{1}{2}\left(\frac{\mu-r}{\sigma}\right)^2 \frac{(V'_{\bar{b}}(y))^2}{V''_{\bar{b}}(y)}.
\end{equation}
The procedure is analogous to Case I. We use the FOC for $\kappa^*$ with $b$ fixed at $\bar{b}$ to relate $V'_{\bar{b}}(y)$ and $V''_{\bar{b}}(y)$ to $\kappa$. This again transforms the HJB into a second-order linear ODE for the inverse function $y=Y_{\bar{b}}(\kappa)$
\begin{equation}\label{eq:second_order_ODE_post_detail_y}
\mathcal{A}_{\bar{b}} \kappa^2 Y''_{\bar{b}}(\kappa) + \mathcal{B}_{\bar{b}}(\kappa, Y) \kappa Y'_{\bar{b}}(\kappa) + \mathcal{C}_{\bar{b}}(\kappa, Y) Y_{\bar{b}}(\kappa) = \mathcal{D}_{\bar{b}}(\kappa, Y).
\end{equation}
The general solution is $Y_{\bar{b}}(\kappa, B_1, B_2) = B_1 \kappa^{m'_1} + B_2 \kappa^{m'_2} + Y_{p, \bar{b}}(\kappa)$.

\vspace{1em}
\noindent
\textbf{Annuitization Region: $y \ge y^*$}
\vspace{0.5em}
In this region, it is optimal to stop. The value function is equal to the transformed annuitization value: $V(y)=G(y)=\frac{(ky)^{1-\gamma}}{\instdiscount (1-\gamma)}$. The habit level $Z$ is factored out via the homothetic transformation.

\vspace{1em}
\noindent
The constants of integration ($A_2, B_1, B_2$) and the thresholds ($\tilde{y}, y^*$) are determined by imposing $C^2$ continuity conditions at the boundaries $y=\tilde{y}$ and $y=y^*$. This system of value-matching, smooth-pasting, and super-contact equations uniquely determines the constants and completes the proof.
\end{proof}

\subsection{Proof of Theorem \ref{thm:optimal_policies_results} (Optimal Policies)}\label{Proof_of_Theorem_1:optimal_policies_results}

\noindent
This appendix provides the verification argument establishing that the candidate value function and policies characterized in Theorems~\ref{thm:value_function} and~\ref{thm:optimal_policies_foc} are globally optimal and satisfy the Hamilton--Jacobi--Bellman variational inequality.

\begin{proof}

The optimal policies are derived by applying the FOCs from Theorem \ref{thm:optimal_policies_foc} to the value function $V(y)$ in each of the three regions defined by $\tilde{y}$ and $y^*$.

\begin{enumerate}
\item \textbf{Optimal Investment $p_t^*$ (Portfolio Weight):}
The FOC is $p^*(y) = -\frac{\mu-r}{\sigma^2}\frac{V'(y)}{yV''(y)}$.
\begin{itemize}
    \item For $y<\tilde{y}$: $p_t^* = -\frac{\mu-r}{\sigma^2} \frac{V'_{\mathrm{int}}(y)}{y V''_{\mathrm{int}}(y)}$.
    \item For $\tilde{y} \le y < y^*$: $p_t^* = -\frac{\mu-r}{\sigma^2} \frac{V'_{\bar{b}}(y)}{y V''_{\bar{b}}(y)}$.
    \item For $y \ge y^*$: Here $V(y)=G(y)=\frac{(ky)^{1-\gamma}}{\instdiscount(1-\gamma)}$. The derivatives are:
    \[ G'(y) = \frac{k^{1-\gamma}}{\instdiscount} y^{-\gamma}, \qquad G''(y) = -\frac{\gamma k^{1-\gamma}}{\instdiscount} y^{-\gamma-1}. \]
    Substituting into the FOC for the portfolio weight gives a constant
    \[ p^*(y) = -\frac{\mu-r}{\sigma^2} \frac{G'(y)}{yG''(y)} = -\frac{\mu-r}{\sigma^2} \frac{ \frac{k^{1-\gamma}}{\instdiscount} y^{-\gamma} }{ y(-\frac{\gamma k^{1-\gamma}}{\instdiscount} y^{-\gamma-1}) } = \frac{\mu-r}{\sigma^2 \gamma}. \]
    The total dollar amount in the risky asset is $\pi_t^* = p_t^* X_t = \left(\frac{\mu-r}{\sigma^2 \gamma}\right) y_t Z_t$.
\end{itemize}

\item \textbf{Optimal Consumption-to-Habit Ratio $\kappa_t^*$:}
\begin{itemize}
    \item For $y < y^*$: The FOC is $\frac{\partial u_1}{\partial \kappa} = V'(y)(1 + \rho y)$. For our utility, the marginal utility is $\frac{\partial u_1}{\partial \kappa} = \kappa^{-\gamma}((\bar{l}-b)^{\psi})^{1-\gamma}$. Inverting this for $\kappa^*$ yields
    \[ \kappa^*(y) = \left( \frac{V'(y)(1+\rho y)}{((\bar{l} - b^*(y))^{\psi})^{1-\gamma}} \right)^{-1/\gamma}. \]
    Applying this to the $V_{\mathrm{int}}$ and $V_{\bar{b}}$ regions confirms the results. The total consumption is $c_t^* = \kappa_t^* Z_t$.
    \item For $y \ge y^*$: The agent is annuitized. The optimal consumption rule from a Merton-style problem is $c_t^* = k X_t$. In our framework, this translates directly to the ratios: $c_t^* = \kappa_t^* Z_t$ and $X_t = y_t Z_t$. Therefore, $\kappa_t^* Z_t = k y_t Z_t$, which implies
    \[ \kappa_t^* = k y_t. \]
\end{itemize}

\item \textbf{Optimal Labor Supply $b_t^*$:}
\begin{itemize}
    \item For $y<\tilde{y}$: Labor supply is interior, derived from the MRS condition. The explicit solution provided in the theorem arises from solving the system of FOCs for a utility function with specific properties relating consumption and leisure.
    \item For $\tilde{y} \le y < y^*$: The constraint is binding, so $b_t^*=\bar{b}$.
    \item For $y \ge y^*$: The agent is retired, so labor supply is zero, $b_t^*=0$.
\end{itemize}

\item \textbf{Optimal Annuitization Time $\tau^*$:} This is an optimal stopping problem. The agent chooses to stop working and annuitize their wealth when the state process $y_t$ first reaches the optimal threshold $y^*$. This threshold is determined by the value-matching and smooth-pasting conditions, $V(y^*)=G(y^*)$ and $V'(y^*)=G'(y^*)$, ensuring an optimal transition. Therefore
    \[ \tau^* = \inf\{t \geq 0 : y_t \geq y^*\}. \]
\end{enumerate}
This completes the detailed derivation of the optimal policies based on the value function structure.
\end{proof}

\subsection{Proof of Proposition \ref{prop:optimal_threshold_y} (Optimal Retirement Wealth Threshold $y^*$)}\label{app:proof_y_star}
\begin{proof}
\noindent The optimal retirement wealth threshold $y^*$ is determined by ensuring a smooth and optimal transition from the working phase (governed by $V_{\bar{b}}(y)$) to the annuitized phase (governed by $G(y)$).

\paragraph{Boundary Conditions.}
For the transition to be optimal, the value function must be $C^1$ (continuously differentiable) across the boundary $y^*$. This imposes two conditions
\begin{enumerate}
    \item \textbf{Value Matching ($C^0$):} $V_{\bar{b}}(y^*) = G(y^*)$
    \item \textbf{Smooth Pasting ($C^1$):} $V'_{\bar{b}}(y^*) = G'(y^*)$
\end{enumerate}
Optimality and $C^2$ continuity also imply the super-contact condition $V''_{\bar{b}}(y^*) = G''(y^*)$.

\paragraph{The Maximized HJB Equation at the Boundary.}
The maximized HJB equation for the region $\tilde{y} \le y < y^*$ must hold in the limit as $y \to y^{*-}$. We evaluate the HJB at $y=y^*$ and substitute the boundary conditions. The HJB is
\begin{equation}\label{eq:HJB_at_boundary_proof_y}
\instdiscount V_{\bar{b}}(y^*) = u_1(\kappa^*, \bar{b}) + V'_{\bar{b}}(y^*) \left[ (r+\rho)y^* - \kappa^*(1+\rho y^*) + w \bar{b} \right] - \frac{1}{2}\theta^2 \frac{(V'_{\bar{b}}(y^*))^2}{V''_{\bar{b}}(y^*)},
\end{equation}
where $\theta = (\mu-r)/\sigma$. The annuitization function $G(y) = \frac{(ky)^{1-\gamma}}{\instdiscount(1-\gamma)}$ and its derivatives at $y^*$ are
\begin{align}
G(y^*) &= \frac{(ky^*)^{1-\gamma}}{\instdiscount(1-\gamma)}, \\
G'(y^*) &= \frac{k^{1-\gamma}}{\instdiscount} (y^*)^{-\gamma}, \\
G''(y^*) &= -\gamma \frac{k^{1-\gamma}}{\instdiscount} (y^*)^{-\gamma-1}.
\end{align}
The optimal consumption $\kappa^*$ at the boundary is determined by the FOC, using the smooth-pasting condition
\[
\frac{\partial u_1(\kappa^*, \bar{b})}{\partial \kappa} = V'_{\bar{b}}(y^*)(1+\rho y^*) = G'(y^*)(1+\rho y^*).
\]
By substituting $V_{\bar{b}}(y^*) = G(y^*)$, $V'_{\bar{b}}(y^*) = G'(y^*)$, $V''_{\bar{b}}(y^*) = G''(y^*)$, and the derived $\kappa^*(y^*)$ into the HJB equation \eqref{eq:HJB_at_boundary_proof_y}, the differential equation reduces to a single, non-linear algebraic equation. This equation provides one of the five constraints needed to solve for the system's unknowns.
\end{proof}

\subsection{Proof of Proposition \ref{prop:threshold_y_tilde} (Labor Constraint Threshold $\tilde{y}$)}\label{Proof_of_Derivation_of_tilde_y}

\begin{proof}
\noindent The threshold $\tilde{y}$ marks the boundary between the interior labor region ($V_{\mathrm{int}}$) and the corner labor region ($V_{\bar{b}}$). To ensure the overall value function is twice continuously differentiable ($C^2$), we must impose three continuity conditions at this boundary.

\paragraph{Boundary Conditions.}
Let $\tilde{\kappa}$ be the consumption-to-habit ratio at the boundary $\tilde{y}$. The conditions are
\begin{enumerate}
    \item \textbf{Value Matching ($C^0$):} $V_{\mathrm{int}}(\tilde{y}) = V_{\bar{b}}(\tilde{y})$
    \item \textbf{Smooth Pasting ($C^1$):} $V'_{\mathrm{int}}(\tilde{y}) = V'_{\bar{b}}(\tilde{y})$
    \item \textbf{Super-Contact ($C^2$):} $V''_{\mathrm{int}}(\tilde{y}) = V''_{\bar{b}}(\tilde{y})$
\end{enumerate}
These three equations form the first part of the complete system used to solve for the unknowns.

\begin{remark}[Characteristic Equations]
The characteristic equations arise from solving the second-order ordinary differential equations (ODEs) for the value function $V(y)$ in the continuation regions ($y < y^*$), as detailed in Appendix~\ref{Proof_of_Theorem_1:_Value_Function}. Because the HJB equation and optimal controls are different for the interior labor region ($V_{\mathrm{int}}$) and the corner labor region ($V_{\bar{b}}$), they each produce a \textit{distinct characteristic equation}. These equations are more complex than the standard \cite{merton1971optimum_na}'s form due to the state-dependent coefficients (e.g., the $(1+\rho y)$ term) in the HJB.

Nonetheless, each characteristic equation yields a set of distinct real roots (e.g., $m_1 > 0, m_2 < 0$ for $V_{\mathrm{int}}$ and $m'_1 > 0, m'_2 < 0$ for $V_{\bar{b}}$). The negative roots ($m_2, m'_2$) are particularly crucial for constructing the homogeneous solutions $A_2 y^{m_2}$ and $B_2 y^{m'_2}$ that satisfy the economic boundary conditions at $y=0$.
\end{remark}
\end{proof}

\subsection{System of Equations for Constants and Thresholds}

\noindent The five unknowns of the free boundary problem-- the integration constants $A_2, B_1, B_2$ and the thresholds $\tilde{y}, y^*$-- are determined by solving the following system of five non-linear algebraic equations derived from the continuity conditions.

\noindent
The general solutions for the value functions in the continuation regions are:
\begin{align*}
    V_{\mathrm{int}}(y) &= A_2 y^{m_2} + V_{p, \mathrm{int}}(y) \\
    V_{\bar{b}}(y) &= B_1 y^{m'_1} + B_2 y^{m'_2} + V_{p, \bar{b}}(y)
\end{align*}
where $m_2, m'_1, m'_2$ are the characteristic roots and $V_{p, \mathrm{int}}(y), V_{p, \bar{b}}(y)$ are the particular solutions for the respective ODEs. Unlike simpler models without habit formation, the complexity of the particular solutions $V_p$ prevents an explicit algebraic derivation of the constants $A_2, B_1, B_2$. The constants and thresholds are thus implicitly defined and solved for simultaneously.

\paragraph{Conditions at the Labor Constraint Boundary $\tilde{y}$.}
The $C^2$ continuity conditions at $y=\tilde{y}$ provide the first three equations
\begin{align}
    A_2 (\tilde{y})^{m_2} + V_{p, \mathrm{int}}(\tilde{y}) &= B_1 (\tilde{y})^{m'_1} + B_2 (\tilde{y})^{m'_2} + V_{p, \bar{b}}(\tilde{y}) \label{eq:sys1} \\
    m_2 A_2 (\tilde{y})^{m_2-1} + V'_{p, \mathrm{int}}(\tilde{y}) &= m'_1 B_1 (\tilde{y})^{m'_1-1} + m'_2 B_2 (\tilde{y})^{m'_2-1} + V'_{p, \bar{b}}(\tilde{y}) \label{eq:sys2} \\
    m_2(m_2-1) A_2 (\tilde{y})^{m_2-2} + V''_{p, \mathrm{int}}(\tilde{y}) &= m'_1(m'_1-1) B_1 (\tilde{y})^{m'_1-2} \nonumber \\ 
    & + m'_2(m'_2-1) B_2 (\tilde{y})^{m'_2-2} + V''_{p, \bar{b}}(\tilde{y}) \label{eq:sys3}
\end{align}

\paragraph{Conditions at the Annuitization Boundary $y^*$.}
The $C^1$ continuity conditions at $y=y^*$ provide the final two equations
\begin{align}
    B_1 (y^*)^{m'_1} + B_2 (y^*)^{m'_2} + V_{p, \bar{b}}(y^*) &= \frac{(ky^*)^{1-\gamma}}{\instdiscount (1-\gamma)} \label{eq:sys4} \\
    m'_1 B_1 (y^*)^{m'_1-1} + m'_2 B_2 (y^*)^{m'_2-1} + V'_{p, \bar{b}}(y^*) &= \frac{k^{1-\gamma}}{\instdiscount} (y^*)^{-\gamma} \label{eq:sys5}
\end{align}

\paragraph{Solution.}
The system of five non-linear equations \eqref{eq:sys1}-\eqref{eq:sys5} is solved numerically for the vector of unknowns $(\tilde{y}, y^*, A_2, B_1, B_2)$.

\end{document}